\theoremstyle{plain} 		\newtheorem{theorem}{Theorem}[section]
\theoremstyle{definition} 	
\theoremstyle{remark}		
\newtheorem{lemma}[theorem]{Lemma}
\DeclareMathOperator*{\argmax}{arg\,max}
\newcommand{\citeN}{\textcite} 
\renewcommand{\cite}{\parencite} 
\begin{document}
\title{ \Large \bf Influencing Busy People in a Social Network}
\author[1]{\bf Kaushik Sarkar}
\author[2]{\bf Hari Sundaram}
\affil[1]{School of Computing, Informatics and Decision Systems Engineering}
\affil[ ]{Arizona State University, Tempe, Arizona 85281}
\affil[2]{Dept. of Computer Science}
\affil[ ]{University of Illinois at Urbana-Champaign, Urbana, Illinois 61801.}
\affil[1]{\texttt{ksarkar1@asu.edu}}
\affil[2]{\texttt{hs1@illinois.edu}}

\renewcommand\Authands{ and }


\date{\today}
\maketitle

\begin{abstract}
We identify influential early adopters in a social network, where individuals are resource constrained, to maximize the spread of multiple, costly behaviors.  A solution to this problem is especially important for viral marketing. The problem of maximizing influence in a social network is challenging since it is computationally intractable. We make three contributions. First,  propose a new model of collective behavior that incorporates individual intent, knowledge of neighbors actions and resource constraints. Second, we show that the multiple behavior influence maximization is NP-hard. Furthermore, we show that the problem is submodular, implying the existence of a greedy solution that approximates the optimal solution to within a constant. However, since the greedy algorithm is expensive for large networks, we propose efficient heuristics to identify the influential individuals, including heuristics to assign behaviors to the different early adopters. We test our approach on synthetic and real-world topologies with excellent results. We evaluate the effectiveness under three metrics: unique number of participants, total number of active behaviors and network resource utilization.  Our heuristics produce 15-51\% increase in expected resource utilization over the na\"ive approach.

\end{abstract}

\section{Introduction}
Often, we are unable to adopt a behavior---despite knowledge of behavior utility, interest in adoption and knowledge of behavior adoption amongst friends---because we lack the resources to adopt the behavior. Consider the following scenarios: Many of John's friends are planning to take the free flu shot offered at the university health clinic this week. Unfortunately for John, he is planning to be away from the university at that time and unable to make use of the opportunity. Mary wants to buy and cook with organic produce like many of her friends, but in her small town the only grocer that sells organic produce is ten miles away; not owning a car, she cannot afford to spend time traveling to the store and back on a public bus. Both examples point to absence of resources, either time, or a tangible resource like a car being an important barrier to adoption of behavior.

This paper investigates the problem of how to maximize the spread of multiple, costly behaviors in a social network when individuals have limited resources. In general, resources can be time, money or any tangible asset. We plan to address this problem by identifying a small set of influential individuals, who by becoming early adopters of the behavior will maximize the spread of the behavior in the network.  The ability to identify such individuals is vital to viral marketing. The problem of \textit{multiple} behavior ``influence maximization'' can be shown to be computationally intractable. Indeed, a simpler version of influence maximization, when we are interested in maximizing the spread of a \textit{single} behavior, \textit{without resource constraints} has been shown to be NP-hard~\parencite{Kempe03}.

Our research in the more generalized influence maximization problem is motivated by a desire to model and analyze a more realistic decision making process: individuals have to decide which subset of behaviors---amongst those that their friends have adopted---to adopt that consume fewer resources than what they possess. Faced with such a scenario, we would expect a rational individual to adopt that subset of behaviors that maximizes their utility while subject to resource constraints. That individuals with limited resources affect the behavior dynamics has empirical support:~\citeN{Hodas2012} suggests that resource constraints help explain why information stops spreading on networks like Twitter.  Resource constraints not only limit individual participation, but also shape how behaviors spread in a network. Although there is some work on diffusion of two competitive behaviors, the general idea of individual level resource constraint, and the presence of multiple costly behaviors is largely unexplored in past work.

In this paper, we develop a model of multiple behavior diffusion that captures the complex dynamics of multiple behavior adoption in resource constrained networks. We answer two specific questions with this model: who are the $k$ most influential individuals in a network? and what set of behaviors should these influential individuals adopt, if they are chosen as the early adopters? We make several contributions in this paper. 
\begin{itemize}
\item  To the best our knowledge, our work is the earliest of its kind to analyze the influence of individual resource constraints on multiple, costly behavior adoption in an network. We propose a model of multiple behavior adoption that extends the earlier threshold model developed by \textcite{Granovetter1978}, via incorporating resource constraints and individual intent. Thus, an individual adopts a behavior when she receives a ``social signal''---knowledge of neighbors actions---of sufficient strength (that is, when sufficient number of neighbors have adopted), the behavior is of high utility and when she has the resources to do so. 
\item  We show that the influence maximization problem in the multiple behavior case is NP-hard. However, we note that this result is  unsurprising  since the influence maximization problem in the single behavior case which is a special case of the research question addressed in this paper was shown to be NP-hard by~\textcite{Kempe03}. Importantly, we prove that the influence maximization problem for multiple behaviors,  with individual resource constraints is \textit{submodular},  implying the existence of a greedy algorithm that approximates the optimal solution within a factor of $1-1/e$. 
\item We propose several efficient heuristics to approximately solve the NP-hard influence maximization problem, including identifying the behavior that each seed ought to adopt. For example, our Expected Immediate Adoption heuristic is $O(n \log n + m)$, where $n$ is  the number of nodes in the graph and $m$ is the number of edges. In contrast, the greedy algorithm that best approximates the optimal solution is $O(n^2)$ with a large constant since the greedy algorithm requires costly stochastic simulation to evaluate the diffusion. The heuristics to assign behavior include assigning behaviors randomly to each node, assigning behaviors proportional to cost, and assigning all seed nodes to have the highest cost behavior.
\end{itemize}

We use three different metrics to evaluate the different heuristics: unique number of participants, number of behaviors in the network and expected resource utilization.  We test our approach on synthetic and real-world topologies with excellent results. We show that two heuristics that evaluate Influence Weight and Expected Immediate Adoption provide very good solutions to the seed selection problem. Our seed selection heuristics produce 15-51\% increase in expected resource utilization over the na\"ive approach of identifying individuals with the highest degree as influential. We find that when behaviors are assigned to seeds proportional to behavior cost, we have the highest resource utilization, and when all seeds are assigned the lowest cost behavior, we see the highest participation.

The rest of the paper is organized as follows. In the next section we review the relevant literature. In Section~\ref{sec:model} we formally define our behavior diffusion model. In Section~\ref{sec:seed-selection} we define the seed selection problem, prove intractability results, and provide an approximation algorithm for a slightly simplified model. In section~\ref{sec:heuristics} we present different heuristics and compare their performance for synthetic as well as real world networks. In Section~\ref{sec:dist} we discuss different behavior distribution strategies and present simulation results.
In Section~\ref{sec:disc} we discuss open issues, extensions and our conclusions. Appendix illustrates and elaborates a few technical issues that are not addressed fully in the main paper.

\section{Related Work}
There is a rich body of literature spanning multiple disciplines that have analyzed the problem of diffusion of behavior.  It would be infeasible to present an exhaustive survey of past work related to this paper and we hope to present a representative sample here.  

Motivated by the work of~\citeN{rogers62} amongst others,~\citeN{bass69} proposed a temporal model of sales of infrequently bought consumer durables (e.g. buying refrigerators). The model proposed that the probability of initial purchase at a time $t$, given that no purchase has been yet made is a linear function of previous buyers.  Based on earlier work that identified different buying behaviors,~\citeN{bass69} proposed that there were essentially two types of consumer behavior---\textit{innovators} and \textit{imitators}. Innovators bought products without being influenced by social pressures, whereas imitators were influenced by the adoption decisions of other buyers. While Bass model has been highly influential---the result is a simple model of aggregate behavior whose parameters can be estimated from sales data---it is a population model that ignores the network structure in which an individual finds herself.~\citeN{Granovetter1978} proposed a simple social influence model of adoption: an individual adopts a behavior if a certain fraction of the group adopt the behavior, and where the fraction exceeds the individual's private threshold.  

As~\citeN{Young2009} points out in his study of innovation diffusion, a major limitation of threshold models is that we do not know \textit{why} an individual is influenced by his peers since Granovetter's model lacks an economic incentive for the individual to adopt the behavior.  This can be addressed via a threshold model that arises out of a network coordination game: an individual adopts a behavior to coordinate with her network neighbors in a manner that maximizes her utility. It is easy to show that the network coordination game, where each neighbor has adopted behaviors with different utilities, is equivalent to each individual possessing a local, private threshold that must be exceeded for her to adopt.

\citeN{Watts02} analyzes random graphs using the linear threshold model to identify conditions for the emergence of global cascades. He find that if ``vulnerable'' nodes percolate\footnote{A node is ``vulnerable'' if its threshold $\theta \leq 1/K$ where it has $K$ neighbors. The vulnerable nodes are said to percolate when they form a giant component.} then global cascades can occur.  In more recent work~\citeN{Watts07} critically evaluate the ``influential hypothesis" which has played a significant role in the development of the theory of social diffusion processes. The influential hypothesis posits that opinion leaders or ``influentials" act as intermediaries in the dissemination of the information from mass media to general public. \citeN{Watts07} shows that except for some special situations large scale social cascades are driven by a critical mass of easily influenced individual rather than the influentials. However, mechanisms to efficiently trigger cascades in such networks is unclear.

There has been much work in Computer Science on the problem of influence maximization---how to efficiently identify seeds (or ``innovators'' in the parlance of~\cite{bass69}) that maximally influence the network.~\citeN{Domingos02} and~\citeN{Kempe03} initiated the study of the computational problem of seed selection in the context of a ``viral'' social diffusion process.~\citeN{Kempe03} formalized the algorithmic problem for \textit{Independent Cascade} and \textit{Linear Threshold} models, proved the intractability results and provided a greedy approximation algorithm based on earlier work by~\citeN{Nemhauser78}. However the greedy algorithm is computationally very expensive in practice since identifying each seed requires a large number of stochastic simulations, Much of the recent work(e.g.~\cite{Leskovec07}) has focused on reducing the computational complexity of the simulations. Identifying cheap computational heuristics that match the performance of~\citeN{Kempe03} are surprisingly less common.~\citeN{Chen09} who developed computationally cheap heuristics for the Independent Cascade model\footnote{In the Independent Cascade model, a vertex $v$ has a fixed probability $p$ of influencing each of its network neighbors to adopt the behavior.  Crucially in the model, it can attempt to influence each of its neighbors just once with probability $p$. There has been much work to identify these probabilities for the independent cascade model~\cite{Saito08,Goyal10,Mathioudakis11}.} that matches the performance of the greedy approximation algorithm is an exception. 

Motivated by earlier work in Economics on cascades by~\citeN{Arthur1989},~\citeN{Chierichetti2014} and~\citeN{Martin2014} study the scheduling of cascades on a arbitrary graph. In the problem that~\citeN{Chierichetti2014} study, there are two competing products and individuals choose one product over the other keeping in mind two factors: their own preference for each product and the payoff from aligning with the choices of their neighbors. If one product has a preference probability $p$ the other has preference probability $1-p$. They show that if they can \textit{schedule} the order in which individuals make decisions on which of the two competing products that they adopt, the number of adoptions is linear in the size of the social network. There is an important difference between the~\citeN{Arthur1989} model and the linear threshold model adopted in this paper. The difference is that while in the~\citeN{Arthur1989} model, a person examines the numbers of adoptees for \textit{both} products to make a decision, in the linear threshold model, a person makes a decision on adopting a product \textit{only} based on number of adoptees for that product. Due to this distinction, the number of adoptees for a product in the linear threshold model is monotone over time and the outcome is order independent.

\citeN{Seeman13} examines the adaptive seeding problem in the framework of two-stage stochastic optimization. The problem here is: given a seed budget, and a subset $X$ of accessible users, how to choose a set of seeds from $X$, utilizing only part of the seed budget, such that the expected value of the influence function (i.e. spread) can be maximized by utilizing the remaining seed budget on some subset of their neighbors. \citeN{Rubinstein15} analyzes the same problem in the more general setting, where different individuals have different activation costs. Although they call this setting by the name ``knapsack" constraint, it differs from the knapsack constrain that we have imposed in our model (see section \ref{sec:model}) in two important aspect: first, they do not consider multiple competing behaviors, and second, in their setting it is more appropriate to consider that the advertiser or the campaign runner is solving a knapsack problem, whereas in our model each individual in solving a knapsack type of constraint for making behavior adoption decisions.  

In Economics, there is a rich body of work~\cite{Bala1998a,Golub2010,Acemoglu2011a} that examines both social learning---how individuals adopt beliefs---and consequently identifying influentials. Social learning is the idea that rational agents take optimal decisions based on the observations of other agents.  While assumptions vary---whether the network is fully observable; only adoptions but not payoffs are observable---the idea that individuals choose the behavior that maximizes their utility has a natural appeal. For example,~\citeN{Bala1998a} introduce the idea of ``learning from neighbors''---analysis of a set of infinite agents who can observe the actions and outcomes of only their network neighbors. The main result is that in a connected graph, if the actions are ranked by payoffs, then in the long run, everyone chooses the same action with probability one.  

The~\citeN{DeGroot1974} influence model has seen significant follow up work (e.g.~\cite{Golub2010}). In the basic model, a person updates their belief (e.g. about the occurrence of an event) using a weighted sum of the beliefs of their neighbors.  For a connected, aperiodic\footnote{A directed graph is aperiodic if 1 is the greatest common devisor of all the lengths of the different directed cycles.}, directed graph, we can show that all closed\footnote{A closed set of nodes is one where any node in this set is not influenced by nodes outside this set.} strongly connected components arrive at a consensus; different closed components will in general arrive at different consensus values.  A node not belonging to any closed component will arrive at a belief that is a weighted average of consensus beliefs of components to which it is connected. Thus we can compute a measure of ``influence'' in the following manner. Assume that $T$ is the influence weight matrix of the social network where $T_{i,j}$ is the weight that node $i$ has for node $j$. Then, it is straightforward to show that $s_{j}$ the influence of a node $j$ is the the $j^{\mathrm{th}}$ entry of the leading left eigenvector $s$ of $T$, where $sT = s$.   One weakness in thinking in terms of influence in the~\citeN{DeGroot1974} model is that since only the closed sets of nodes arrive at a consensus, it is possible to construct directed graphs where a small set of closed nodes have all the influence since only closed sets of nodes will have non-zero entries in the eigenvector $s$. This is exactly the issue in the original PageRank algorithm where the PageRanks of nodes except for closed sets will go to zero.  A simple solution in the PageRank case was to used a scaling with random restarts. Extending the scaled PageRank model to the~\citeN{DeGroot1974} model essentially implies that individuals look at the average beliefs of their neighbors with probability $p$ and a randomly chosen individual's belief with probability $1-p$, where $p$ is the scaling factor.  

There is a clear distinction between the linear threshold models (e.g.~\cite{Kempe03}) analyzed in Computer Science literature and the~\citeN{DeGroot1974} influence model or the Bayesian social learning frameworks (e.g.~\cite{Bala1998a}) analyzed in research papers in Economics. In the linear threshold model, the adoption is ``progressive''---once a node adopts a behavior, it will never drop that behavior~\cite{Kempe03}. In the~\citeN{DeGroot1974} influence model and the social learning frameworks, individuals update their beliefs at every time-step as a weight average of the beliefs of their network neighbors. Their beliefs will converge in the limit\footnote{It can be shown the convergence speed depends on the size of the second largest eigenvalue of $T$, the influence weight matrix. See~\citeN{Jackson2008} for a textbook proof.}.

Our work on maximizing the spread of behavior in resource constrained networks is informed by this literature, but is markedly different in a number of aspects. Much of the existing literature is concerned with diffusion of a single influence, while simultaneous diffusion of multiple influences is a more realistic scenario. The idea that individuals are resource constrained is important---were it not so, adoption rates of multiple behaviors would follow trivially from prior work on influence maximization~\parencite{Kempe03}. Introduction of resource constraints implies that each individual now has to choose from a subset of behaviors adopted by his network neighbors such that the cost of her behaviors is less than her available resource in a manner that maximizes utility. Thus, each individual solves a knapsack problem of picking behaviors to maximize their utility while these behaviors satisfy their individual resource constraints.~\citeN{Bharathi2007} and \citeN{Carnes07} discuss the problem of multiple competing influences, but they also do not incorporate the resource constraints or the utility maximizing behavior of individuals into their models.  That individuals in a social network are resource constrained has an empirical basis---\citeN{Hodas2012} provide empirical evidence in support of the hypothesis that social contagions are constrained by finite amount of resources (e.g. time to process information) available to the individuals constituting the social network. However, their work does not shed any light on the algorithmic question of influence maximization in resource constrained networks. 

Our notion of individual resource constraints is a form of \textit{bounded rationality}, a well known idea in Economics (e.g.~\parencite{Simon1972,Kahneman2003}). In our framework, individuals cannot view the entire network, but only their network neighborhood. However, we do assume that the network neighborhood is fully observable and that each individual is rational---able to evaluate the utility of a set of behaviors in her neighborhood and be able to identify the set that maximizes utility.  To the best of our knowledge the present work is the first investigation of the seed selection problem for multiple behavior diffusion in a resource constrained social network.

\section{Multiple Behavior Adoption for Resource Bounded Networks}\label{sec:model}
In this section we introduce our multiple behavior diffusion model. First, we will describe our model of multiple behavior diffusion when individuals have bounds on resources available to them. Then we will introduce metrics  including resource utilization, unique participation and number of behavior adoptions to evaluate the behavior adoption process. We represent the social network by an undirected graph $G=(V,E)$. Each node $v \in V$ of the graph $G$ represents an individual and an edge $e \in E$ between two nodes indicate a social relationship between the two individuals. Without loss of generality, we assume that the goal is to spread $k$ behaviors in the network.

\subsection{Our Diffusion Model} \label{sub-sec:model}
We now describe the model for each user, the properties of each behavior and the behavior adoption process. For easy reference table \ref{tab:glossary} presents the symbols used subsequently in this paper. Conceptually our behavior adoption model can be described as follows---an individual adopts a new behaviors if the behavior has some value to her i.e. she has some interest in the behavior (intent), a significant number of her friends have adopted the behavior (social signal), and she has enough available resource to pursue it (resource). 

\begin{table}[htb]
\footnotesize
\centering
    \caption{Glossary of symbols used in the paper. The first half contains symbols used in this section, and the second half contains symbols used in the next section on influential identification problem
    }\label{tab:glossary}
    \begin{tabular}{cp{11cm}} \toprule
        Symbol & Meaning \\ \midrule
        $G$ & The undirected graph that represents the social network \\
        $V$ & The set of all individuals in the social network \\
        $E$ & The set of all social relationships between individuals \\
        $k$ & Number of behaviors  \\
        $c_{i}$ & Cost of adoption associated with behavior $i$. $0\le c_{i} \le 1$ \\
        $u_{i}$ & Utility obtained from adoption of behavior $i$. $0\le u_{i}\le 1$ \\
        $r(v)$ & Resource available to $v\in V$ for behavior adoption. $0\le r(v) \le 1$ \\
        $N(v)$ & The set of neighbors of $v\in V$ in the social network \\
        $\theta_{i}(v)$ & Threshold associated with behavior $i$ for individual $v\in V$.
        $0\le\theta_{i}(v)\le 1$ \\
        $l_{i}(v)$ & Strength of social signal associated with behavior $i$ -- i.e. sum
        of influence weight exerted by the neighbors with adopted behavior
        $i$ -- on $v\in V$. $0\le l_{i}(v) \le 1$ \\
        $w$ & Relative weight assigned to utility for computing the individual payoff.
        $0\le w \le 1$  \\
        $p_{i}(v)$ & Payoff associated with behavior $i$ for individual $v\in V$, defined
        as $wu_{i}+(1-w)l_{i}(v)$ \\
        $s(v)$ & The amount of resource that individual $v\in V$ spends to adopt behaviors \\
        $b$ & The fixed budget (number) of seeds in the influential identification
        problem \\ \midrule
        $S$ & The set of seeds for the $k$ different behaviors \\
        $\sigma(S)$ & Expected number of individuals with at least one behavior at the end,
        starting with the seed set $S$ \\
        $\kappa(v)$ & The largest index $j$ such that $c_{j}\le r(v)$ for the individual
        $v\in V$ (We assume that the behaviors are indexed in the non-decreasing
        order of cost) \\
        $b_{v,w}$ & Influence weight exerted by the neighbor $w$ on the individual $v$.
        $w,v\in V$ \\
        $S_{i}^{(t)}$ & The set of individuals with behavior $i$ at the end of time step
        $t$ in the sticky multiple behavior case \\
        $S^{(t)}$ & The set of individuals with at least one behavior at the end of time
        step $t$ \\
        $\sigma'(S)$ & Expected number of nodes with at least one behavior at the end of
        the process, starting with seed set $S$ \\
        $R(v,X)$ & The set of nodes with behavior seeded by $v\in S$, and reachable
        from $v$ through a live edge path under the particular choice of
        live/blocked edges $X$ \\ \bottomrule
    \end{tabular}

\end{table}

Each behavior $i$ has a cost $c_i $ and a utility $u_i $ associated with it.  
In a simplification, we assume that both the cost $c_i $ and the utility $u_i $ of behavior $i$ are intrinsic to the behavior and independent of the individual who adopts the behavior. Without loss of generality, we assume that $0 \leq c_i , u_i  \leq 1$.

Individuals are resource constrained: an individual may have limited time, money or may not possess other material resources to adopt a behavior. Therefore, we assign a \textit{fixed} resource $r(v)$ for each individual $v \in V$ towards adopting behaviors. The resource satisfies $0 \leq r(v) \leq 1$. For example, if we assume that individuals' resources are independent and identically distributed then the resource value $r(v)$ can be assumed to be obtained from a uniformly distributed random variable $U(0,1)$.  Without loss of generality we assume that the resource type (e.g. money, time) is the same as the cost type. Let $N(v)$ denote the set of neighbors of $v$ in the network. Then we assume that a neighboring node $u$ asserts a social influence on node $v$ with weight $1/|N(v)|$.  

An individual will adopt a behavior $i$ when she receives a strong social signal, has the resources to do so and when there is sufficiently high payoff in adopting the behavior. A behavior is a likely candidate for adoption when the strength of social signal exceeds a threshold, and the individual has enough resource to adopt the behavior. Figure \ref{fig:concept} depicts the situation where the candidate behaviors are those for which the high social signal and resource availability conditions are met. We assume that each individual $v$ has a different, fixed, threshold $\theta_i (v)$ for each behavior, and that each threshold is obtained independently from a uniformly distributed random variable $U(0,1)$. The strength of social signal is measured by $l_i (v)$ which is defined as the sum of influence weights---the social signal---exerted on $v$ by its neighbors who have adopted behavior $i$. The payoff $p_i (v)$ for a behavior $i$ is defined as the weighted sum of the intrinsic utility $u_i $ and the local network utility $l_i (v)$. That is, $p_i (v) = wu_i  + (1 - w)l_i (v)$. Where, $w$ denotes the relative weight of the intrinsic utility (i.e. intent). Figure \ref{fig:concept} also shows this situation where the payoff is determined by social signal and intent. An individual adopts only those candidate behaviors that have high payoff (shown as the intersection between Candidate and Payoff in Figure \ref{fig:concept}). If there are multiple candidate behaviors, then an individual adopts a subset of candidate behaviors that maximizes total payoff.

\begin{figure}[htb]
    \begin{centering}
    \includegraphics[width=\columnwidth]{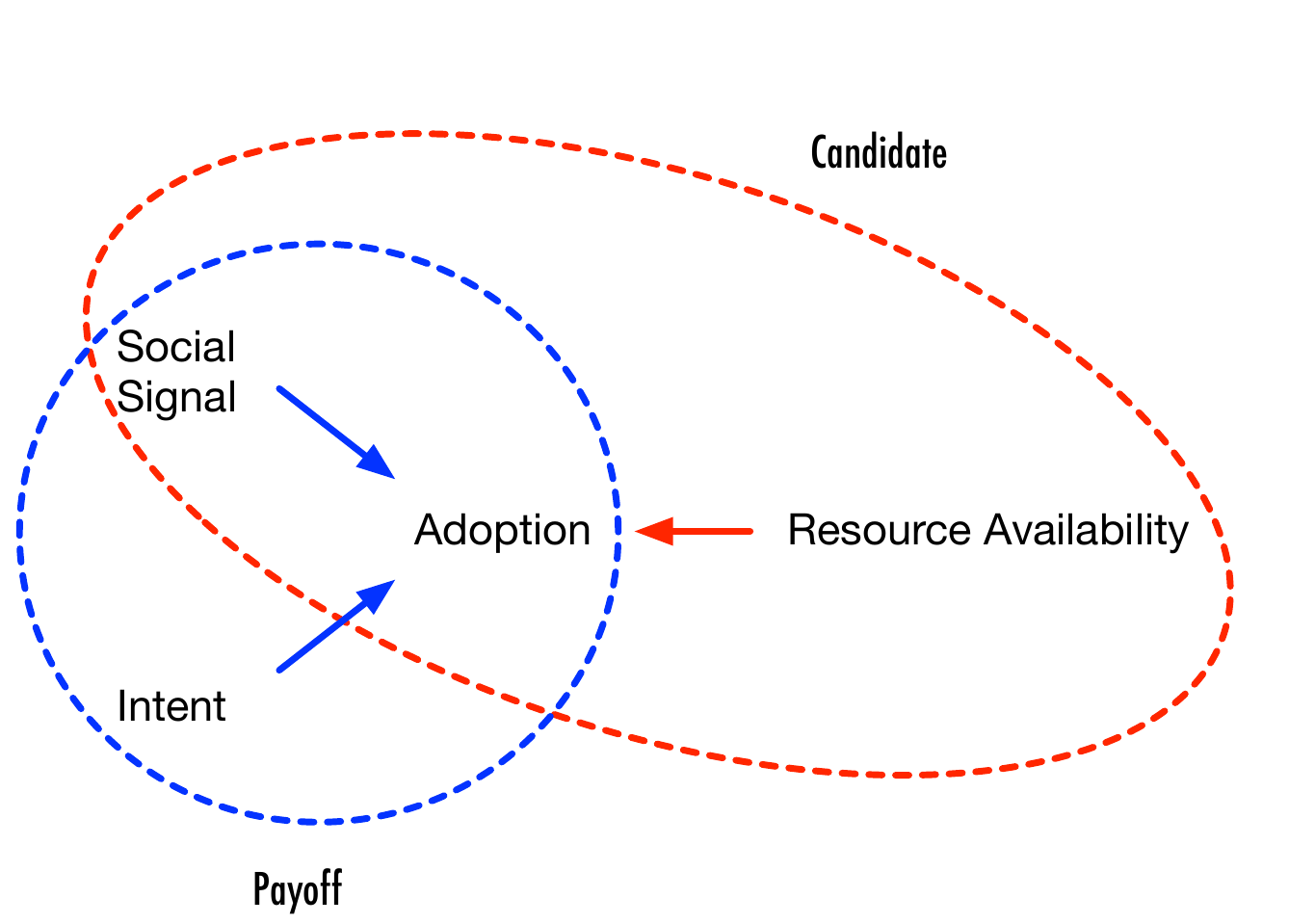}
    \par\end{centering}
    \caption{Payoff for adopting a behavior comes from an individual's intent and social signal.}\label{fig:concept}
\end{figure}

Let us examine the diffusion of behavior over time, to illuminate the key ideas. The process takes place over discrete epochs\footnote{Notice that while individual actions in a network are asynchronous, we can choose an appropriate time granularity for analysis to assume synchronized decision making.}. We assume each node is aware of the behaviors adopted by her neighbors. The individual $v$ first identifies all candidate behaviors. A behavior $j$ is a candidate to be adopted if two conditions hold. First, the social signal strength for behavior $j$ must exceed the threshold for that behavior at node $v$, i.e. $l_j(v) \ge \theta_j(v)$. Second, the individual $v$ must have the resources to adopt the behavior, i.e. $r(v) \ge c_j$.  The first condition is the familiar Linear Threshold (LT) model \parencite{Kempe03}. In problem formulation there are multiple behaviors, and the individual $v$ chooses a subset of candidate behaviors that maximizes the total payoff, subject to the condition that the sum of the adoption costs of the behaviors is less than the resource constraint. Let $B_v$ be the set of candidate behaviors for an individual $v$. So $v$ adopts a set of behaviors $B \subseteq B_v$ that maximizes $\sum_{i \in B}p_i (v)$ subject to the constraint that $\sum_{i \in B}c_i \le r(v)$. At every epoch, the individual $v$ evaluates all behaviors, including behaviors already adopted, to evaluate payoff.
The behavior diffusion process continues until no additional adoption is possible. 

In our diffusion model, we assume that the total resources available $r(v)$ at each node  are known, while the threshold for adoption $\theta$ for any behavior is unknown. This assumption is reasonable when people are willing to make public their available resources to participate in a set of behaviors. This can arise say in a private, mobile social network app focused on adoption of healthy behaviors including wellness, healthy eating and exercise, where individuals join the network to participate in healthy behaviors but each individual is resource limited. An individual may declare that she has only one hour to spend on exercise each week, but would like to be nudged to participate in a health-related activity.

Figure \ref{fig:model} shows an illustration of the spread of behaviors with a four node network where three different behaviors---recycling, using public transport and eating locally grown food---are denoted by behaviors $1$, $2$ and $3$ respectively. At time step $0$ the state of the network is shown in \ref{fig:state-before}. At this time step, for $v$, the social signal of eating locally grown food is weak. So $v$ considers only the recycling and the using public transport behaviors for adoption.  After maximizing payoff subject to the resource constraint, $v$ adopts only the recycling behavior.  Although public transport has strong social signal, $v$ cannot adopt that behavior because it does not have enough resources to adopt the behavior. Notice that the payoff for recycling is higher than that of public transport, though the intrinsic utility of recycling was lower than that of public transport. 

\begin{figure}[hp]
	\parbox{.4\linewidth}{
	\includegraphics[trim=50 400 125 10, width=\linewidth]{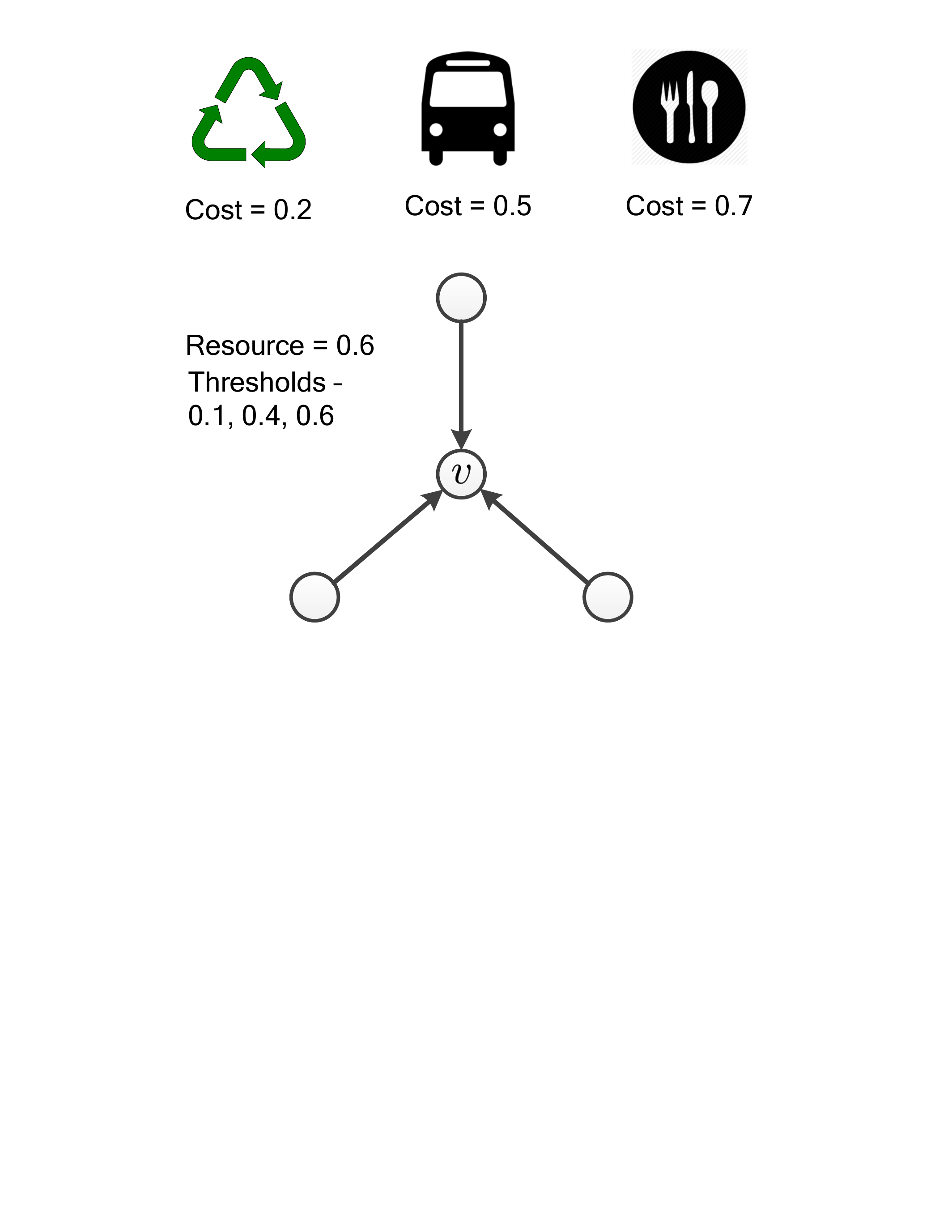}}%
	\hspace{.2\linewidth}%
	\parbox[][][t]{.4\linewidth}{%
	\subcaption{The three beahviors - (1) recycling, (2) using public transport, and (3) eating organic food with respective costs as well as the network is shown. The intrinsic utility of the behaviors are same as the cost. So $c_1=u_1=0.2, ~c_2=u_2=0.5, ~c_3=u_3=0.7$. Resource of the node $v$, $r(v)=0.6$, and the thresholds are - $\theta_1(v)=0.1,~\theta_2(v)=0.4,~\theta_3=0.6$.}}
	\parbox{.4\linewidth}{
	\includegraphics[trim=85 450 125 0, width=\linewidth]{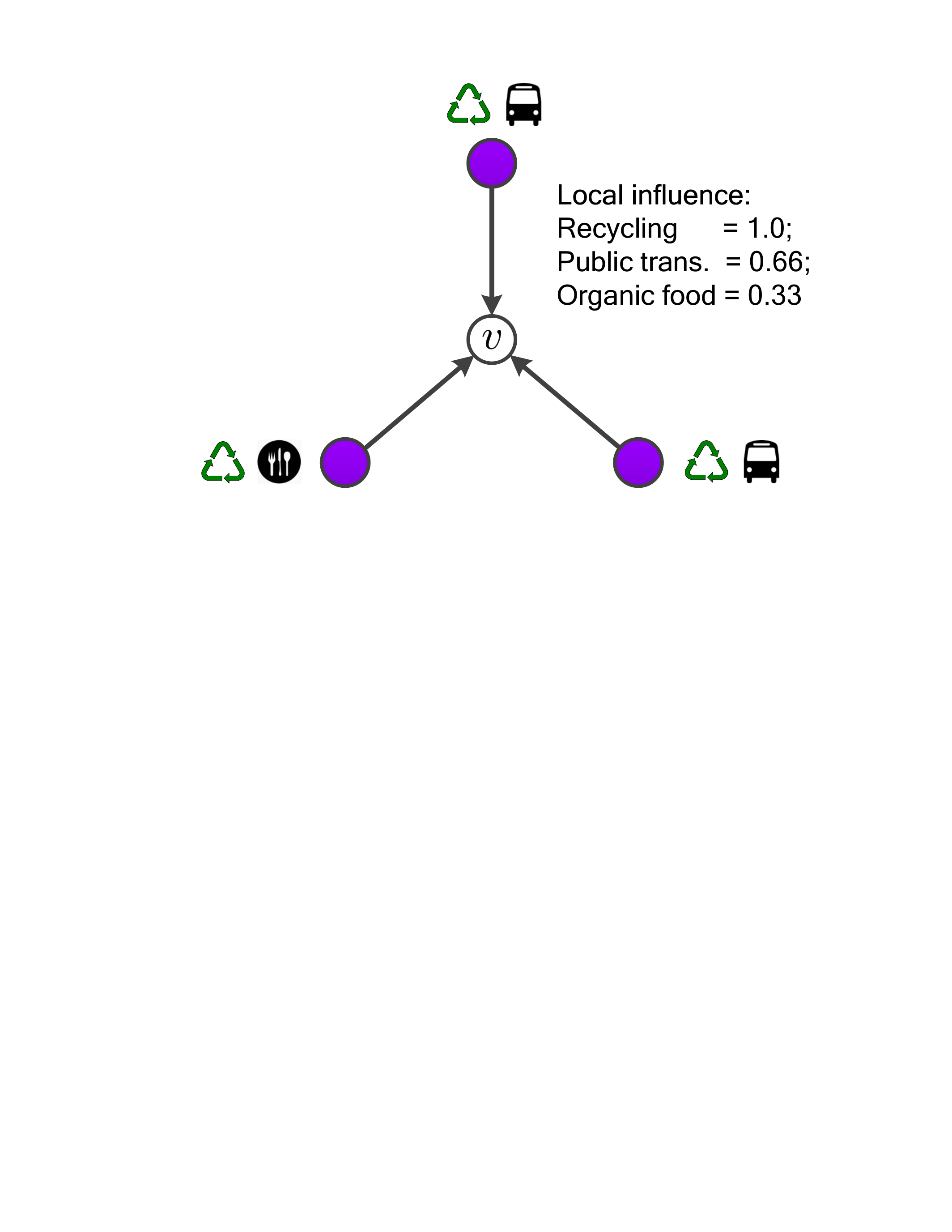}}%
	\hspace{.2\linewidth}%
	\parbox[][][t]{.4\linewidth}{%
	\subcaption{This is the network at time step $0$. All three of the neighbors of $v$ have adopted recycling, two of them have adopted public transport, and only one of them is eating organic food. $v$ has not adopted any behavior yet. The local influences for the three behaviors are as follows - $l_1(v)=1.0,~l_2(v)=0.66,~l_3(v)=0.33$.}\label{fig:state-before}}
	\parbox{.4\linewidth}{
	\includegraphics[trim=85 450 125 30, width=\linewidth]{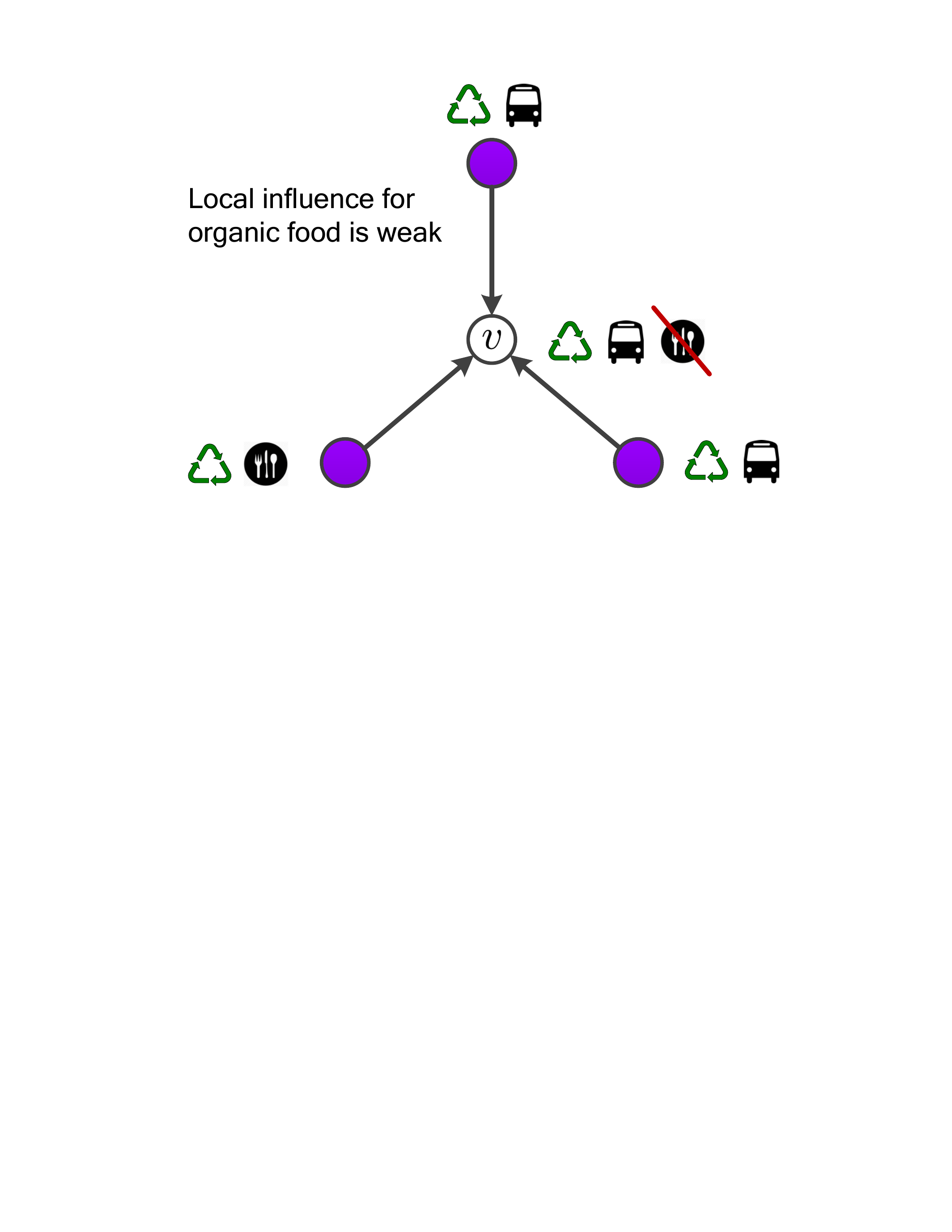}}%
	\hspace{.2\linewidth}%
	\parbox[][][t]{.4\linewidth}{%
	\subcaption{Local influence for organic food is less than the threshold, i.e $l_3(v)<\theta_3(v)$. So $v$ will not consider organic food for adoption.}}
	\parbox{.4\linewidth}{
	\includegraphics[trim=85 450 125 30, width=\linewidth]{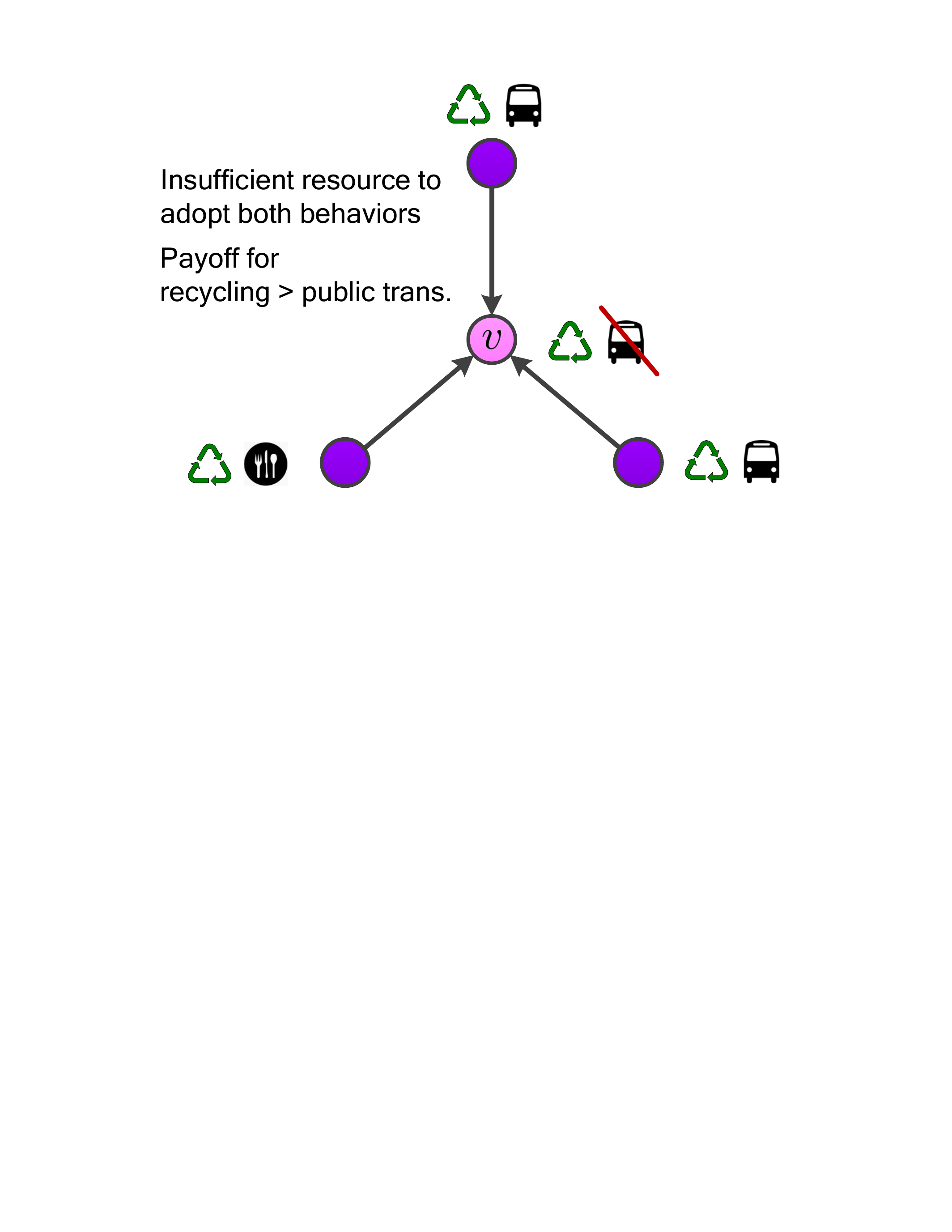}}%
	\hspace{.2\linewidth}%
	\parbox[][][t]{.4\linewidth}{%
	\subcaption{$c_1+c_2>r(v)$, so $v$'s resource is insufficient for adopting both recycling and public transport. Payoff for recycling, $p_1(v)=0.5\times 0.2+0.5 \times 1.0=0.6$, and public transport, $p_2(v)=0.5 \times 0.5 + 0.5 \times 0.66=0.58$ ($w=0.5$). For $v$ the payoff for recycling is higher than the payoff for using public transport though the intrinsic utility of public transport is higher than that of recycling. So $v$ will adopt recycling at the end of time step $1$. }}
	\caption{Multiple behavior adoption model.}\label{fig:model}
\end{figure}

For the purpose of deriving the analytical proofs, we shall assume that the adoption of behaviors is ``sticky'', that is, once a node adopts a behavior it never drops the behavior.  This simplification is also known as \textit{progressive} behavior adoption \parencite{Kempe03}.  Once a node adopts a ``sticky'' behavior its available resource to adopt other behaviors decreases. 

Having described the multiple behavior diffusion process, now we describe different measures of behavior diffusion.

\subsection{Measurement of Diffusion}\label{subsec:measure}
We measure the effectiveness of the diffusion process with three metrics: total participation, total adoption and resource utilization. The metrics are useful to understand the different seed selection heuristics discussed in the next section. Since the behavior adoption is a stochastic process, we compute the expected value of each metric through simulation.
\begin{description}
\item[Total Participation] This metric counts the number of individuals who have adopted at least one behavior (i.e. become active) during the process; we measure the expected number of adoptions via simulation. Exact computation of this metric is shown to be \#P-hard~\parencite{Chen10}). 
\item[Total Adoption] In contrast to total participation, we need to keep track of the total number adoptions of behaviors during the diffusion process.  We count the number of adoptions over all the behaviors during a diffusion process and via simulation determine the expected adoptions. Notice that since an individual can adopt more than one behavior, total participation can be less than total adoption. For the familiar single behavior adoption problem, these two measure will be identical.
\item[Resource Utilization] This metric captures the \textit{efficiency} of the network to adopt costly behaviors. 
Not all resources available in a social network may be used for behavior adoption. This is  because individuals have variable resources, and the may be unable to adopt the subset of behaviors that fully takes advantage of their desire to participate because of two reasons. First, they may have  more resources than needed to adopt a behavior. Second, if their friends (i.e. network neighbors) have limited resources, then the social signals that they receive will be about adopting low-cost resources, and hence a particular individual may never see costly behaviors in their social circle that they could potentially adopt. Let us assume that a node $v$ with resource $r(v)$ has adopted one or more behaviors. Let $s(v)$ be the amount of resource that $v$ has used to adopt those behaviors, where  $s(v) \le r(v)$. Therefore the individual has $r(v)-s(v)$ amount of his resource remaining unused. Thus, \textit{Resource utilization} is the expected value of the ratio $\sum_{v \in V}s(v) / \sum_{v \in V}r(v)$ i.e. the ratio of total utilized resource to the total amount of available resource of all the individuals in the social network. 
\end{description}

Having discussed the multiple behavior diffusion process under resource constraints and three measures of diffusion of behavior, we now turn to discuss the problem of identifying the top-$k$ influential nodes or seeds in a resource constrained network. There are two principal questions: how to efficiently identify influential nodes; for each node, assign the set of behaviors that this node ought to adopt. In the next section, we shall formally define the influential identification problem, show intractability and present an approximation algorithm.  In Section \ref{sec:heuristics}, we shall introduce several heuristics to identify influential nodes and well as show experimental results. In Section \ref{sec:dist}, we address the question of distributing behaviors onto the influential nodes.

\section{The Influential Identification Problem} \label{sec:seed-selection}
In this section we introduce the first algorithmic problem that we want to address in the context of multiple behavior diffusion in a resource constrained social network. This problem is called \textit{seed selection} problem. In the next section we will formally define this problem and analyze its complexity. Then we will describe and analyze different strategies for solving the problem. Then we will provide experimental evaluation of those strategies on synthetic as well as real world social networks.

\subsection{Problem Definition}
There are two key problems: we need to identify the set of early adopters or seed nodes and we need to determine which behaviors ought to be adopted by each seed node. We assume that the number of initial adopters is small in comparison to the size of the network. This is reasonable as it corresponds to an advertiser with a finite budget to persuade the seeds to adopt. Here we identify two subproblems which are related to seed identification. To simplify things, in this section we will assume that the behaviors are uniformly distributed over the seed set.

\begin{description}
\item[\textbf{P1: Resource Utilization Maximization}] Given a fixed seed budget $b$ and a fixed distribution of behaviors in the seed set, we want to select $b$ nodes in the network such that the resource utilization metric is maximized.
\item[\textbf{P2: Total Participation (or Adoption) Maximization}]  Given a fixed seed budget $b$ and a fixed distribution of behaviors in the seed set, we are interested in finding $b$ nodes in the network that maximize the total participation (or total adoption) in the network.
\end{description}

It can be easily shown that the optimization problems P1 and P2 are NP-complete. We show that influence maximization problem for LT model, which is proven to be an NP-complete problem \cite{Kempe03}, is a special case of P1. Let the number of behaviors $k=1$ and the cost of adoption of that behavior is also $1$. Each node $v$ is allocated resource $r(v)=1$. For these values of the parameters our multiple behavior diffusion model reduces to the LT model of influence propagation and resource utilization can be calculated as the ratio of the spread and total number of nodes in the network. So maximizing the resource utilization translates into maximizing the spread. Same transformation applies to problem P2 also since total participation (and total adoption) is identical to the spread in the one behavior case. Next, we propose a number of heuristics to solve the problem.

\subsection{Approximation Algorithm for the Sticky Model} \label{sub-sec:approx-algo}
In this section we will provide an approximation algorithm for a variant of the problem P2 (total participation maximization) when behavior adoption is sticky --- i.e. once an individual adopts a behavior he never abandons that behavior. Note that both problems---P1, P2 are still NP-hard in the Sticky model. We will construct our algorithm by first showing that the total participation function is submodular. Then following a technique by \cite{Nemhauser78} we will obtain an approximation algorithm for the problem.

\subsubsection{Submodularity of Total Participation}
In this section we will show that the total participation function for the sticky model is submodular. Throughout this section we will assume that the $k$ behaviors are indexed in the ascending order of their cost. For a given set $S$ of seeds for the $k$ behaviors, let $\sigma(S)$ denote the \emph{total participation} i.e. the expected number of active nodes, regardless of behavior adopted, at the end of the process. We will show that $\sigma(S)$ is a submodular function. 

\begin{theorem}
    For an arbitrary instance of the Sticky Multiple Behavior Diffusion model the total participation function $\sigma(.)$ is submodular.
\end{theorem}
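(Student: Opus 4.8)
The plan is to follow the live-edge (reachability) technique introduced by \citeN{Kempe03} for the Linear Threshold model, adapted to handle multiple behaviors, resource constraints, and the knapsack-style adoption rule. The first step is the standard reduction: since $\sigma(S)$ is an expectation over the random thresholds $\theta_i(\cdot)$, it suffices to exhibit an auxiliary source of randomness $X$ --- a random subgraph of ``live'' edges --- and a set function $f_X(S)$ such that (i) for every fixed $X$, $f_X$ is monotone and submodular, and (ii) $\sigma(S) = \mathbb{E}_X[f_X(S)]$. A non-negative combination of monotone submodular functions is monotone submodular, so this yields the theorem. The natural choice, mirroring the glossary's $R(v,X)$, is $f_X(S) = \bigl|\bigcup_{v \in S} R(v,X)\bigr|$, a coverage function, which is automatically monotone and submodular; all the work is in defining $X$ correctly and proving (ii).

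Second, I would define the live-edge process. For each node $v$ and each behavior $i$ with $i \le \kappa(v)$ (the behaviors $v$ can afford), independently select at most one incoming edge: edge $(w,v)$ is \emph{live for behavior $i$} with probability $b_{v,w}$, and with probability $1 - \sum_{w \in N(v)} b_{v,w}$ no incoming edge is live for $i$. Collect these choices into $X$. The claim to establish is that, starting from seed set $S$ (with its fixed behavior assignment), the set of nodes active at the end of the sticky diffusion has exactly the same distribution as $\bigcup_{v\in S} R(v,X)$, where $R(v,X)$ is the set of nodes that end up carrying $v$'s behavior and are joined to $v$ by a path of edges live for that behavior along which the behavior is actually propagated. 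This is proved by induction on the epochs, as in the single-behavior case: conditioned on the history up to epoch $t$, the event that a still-inactive node $v$ acquires behavior $i$ at epoch $t+1$ is governed by whether the (as yet unexposed) live-for-$i$ in-edge of $v$ points to a neighbor that has adopted $i$, an event of the correct probability $l_i(v)$ by the principle of deferred decisions.

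The main obstacle is precisely the coupling among behaviors introduced by the payoff-maximizing knapsack: whether a node $v$ \emph{propagates} behavior $i$ (and hence whether reachability flows through $v$ for $i$) depends not only on $i$ crossing $v$'s threshold but on $i$ belonging to the optimal feasible subset of \emph{all} of $v$'s candidate behaviors, which is a global optimization. The inductive step therefore has to show that this optimization does not disturb the distributional correspondence for the \emph{total-participation} statistic. The key simplifying observation I would lean on is that $v$ becomes active exactly at the first epoch at which some affordable behavior crosses its threshold: at that epoch $v$ still holds its full resource $r(v)$, so every candidate behavior $i$ (which by definition satisfies $c_i \le r(v)$) lies in a feasible subset of positive payoff, forcing the knapsack solution to be non-empty --- hence $v$ is active iff \emph{some} behavior with index $\le \kappa(v)$ reaches it. One then must argue that the behaviors a node actually adopts still suffice to carry the relevant signals onward; this is the delicate point, and I expect the bulk of the proof to consist of setting up the epoch-by-epoch coupling carefully enough that the knapsack choices are consistent on both sides.

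Finally, once (ii) is established, the conclusion is immediate: each $f_X$ is a coverage function, hence monotone and submodular, and $\sigma = \mathbb{E}_X[f_X]$ is a convex combination of such functions, therefore submodular. This also sets up the subsequent application of the \citeN{Nemhauser78} greedy approximation bound.
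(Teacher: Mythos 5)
Your proposal takes essentially the same route as the paper: the paper defines the same live-edge process (each node $v$ makes $\kappa(v)$ independent draws, at most one live in-edge per affordable behavior), proves by induction on epochs that an inactive node's conditional activation probability, namely $1-\prod_{i=1}^{\kappa}\bigl(1-\tfrac{\sum_{w\in S^{(t)}_{i}\setminus S^{(t-1)}_{i}}b_{v,w}}{1-\sum_{w\in S^{(t-1)}_{i}}b_{v,w}}\bigr)$, is identical in both processes, and then concludes submodularity via the coverage function $\sigma'_X(S)=|\cup_{v\in S}R(v,X)|$ and a non-negative combination over outcomes $X$, exactly as you outline. The ``delicate point'' you flag about the knapsack coupling is handled in the paper simply by building the same payoff-maximizing adoption rule into the live-edge process and conditioning on the per-behavior active sets $S_i^{(t)}$ in the induction, which matches your observation that a node activates exactly when some affordable behavior first reaches it.
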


\begin{proof}
    The proof consists of three steps---first we define an equivalent alternative process of the Sticky Multiple Behavior Diffusion process, then show that the alternative process and the sticky multiple behavior diffusion process are both equivalent in the sense of the distribution and finally prove the submodularity for the equivalent process.
    
    We introduce an alternative live edge process to model behavior diffusion since the threshold model for behavior adoption is known to be not sub-modular.~\cite{Kempe03}. For the alternative model we would like to distinguish between the behaviors a node can adopt under \textit{some} circumstances and those that it cannot under \textit{any} circumstances. If the cost of adoption of a behavior is greater than the available resource of a node then clearly the node canot adopt that behavior. Since the behaviors are indexed in the ascending order of their cost, this distinction can be made by a single indicator variable. For each node $v$, let us define $\kappa(v)$ as the largest index $j$ such that $c_j\le r(v)$. If $v$ does not have enough resource to adopt any of the behaviors then $\kappa(v)$ is defined to be $0$. 
    
    Prior to any diffusion each node $v$ selects at most $\kappa(v)$ edges---thus once for each behavior that it can potentially adopt---by repeating with replacement the following random edge selection process $\kappa(v)$ times. Each node $v$ selects the edge $e_{v,w}$ with probability $b_{v,w}$ and no edge with probability $1-\sum_{w \in N(v)}b_{v,w}$, where $b_{v,w}$ denotes the influence weight exerted by the neighbor $w$ on $v$ and $\sum_{w \in N(v)}b_{v,w} \leq 1$. So $v$ selects at most one edge for each of the $\kappa(v)$ behaviors. 
    
    If an edge is selected then that edge is designated as the \emph{live} edge for the behavior $i$.  All the other edges are considered \emph{blocked} for that behavior. In this model we start with one set of seeds for each of the $k$ behaviors, $k$ sets in total. In time step $t$, a node $v$ considers a behavior $i$ for adoption such that: the behavior $i$ is not already adopted by $v$ and that a neighbor of $v$, connected to $v$ by the live edge for the behavior $i$, has adopted behavior $i$ by time $t-1$. The node then adopts a subset of all such  considered behaviors to maximizes its total payoff subject to the constraint that the combined cost of the behaviors is less than the available resource. Once a node adopts a behavior it becomes \textit{active} with respect that behavior and the behavior sticks---when a node adopts a behavior it never gets rid of it.  
    
    Next we prove a lemma that the alternative live process is stochastically equivalent to the Sticky Multiple Behavior Diffusion model described earlier.
    
    \begin{lemma}\label{lemma:alternative-proc}
            For a given seed set $S$, the following two distributions over the set of nodes are the same:
            \begin{enumerate}
                \item The distribution over active sets obtained by running the sticky multiple behavior diffusion model to completion starting with $S$. 
                \item The distribution over sets of active nodes reachable from $S$ via live edges under the random selection of edge model described above.
            \end{enumerate}
    \end{lemma}
    
    \begin{proof}
		First we prove for the simpler case when $k=1$, i.e. there is only one behavior that diffuses in the network and then generalize to the case of arbitrary number of behaviors.  Notice that in this case each node will have at most one live edge. This case is  similar to the Linear Threshold (LT) model described in \cite{Kempe03} with the key difference that in our case each node is resource constrained. Let $G=(V,E)$ be the given graph. We only need to consider the nodes $v$ with $r(v)\geq c_1$. We delete all nodes with $r(v)<c_1$ and  edges associated with these nodes from $G$ producing a graph $G'$. Thus in $G'$, the behavior adoption process  reduces to the single behavior LT model discussed in \cite{Kempe03}. 
		
		Now we repeat that proof from~\citeN{Kempe03} since it will serve as the basis for the generalized case of $k >1$.  We argue by induction over the time step $t$. Let $S^{(t)}$ be the set of nodes who have adopted behavior $1$ at the end of time step $t$ for the sticky  behavior diffusion model with $k=1$. We need to know the probability that a node $v$ with $r(v)\ge c_1$ that have not yet adopted behavior $1$ at the end of time step $t$ will adopt the behavior in the next time step $t+1$. This probability is the same as the probability that the nodes in $S^{(t)}\setminus S^{(t-1)}$ will push the influence weight of $v$ over its threshold at time $t$, given that the threshold was not already crossed at time $t-1$. This probability is given by:
		
		\begin{equation*}
		\frac{\sum_{w\in S^{(t)}\setminus S^{(t-1)}} b_{v,w}}{1-\sum_{w\in S^{(t-1)}}b_{v,w}}. 
		\end{equation*}
		
		For the alternative live edge process each node $v$ with $r(v)\ge c_1$ selects at most one live edge randomly at the beginning of the process. Notice that we don't have to select the live edges over time, they can all be simply selected in the beginning.  Under this model we need to compute the probability that a node $v$ with $r(v) \ge c_1$ that has not adopted behavior $1$ at the end of time step $t$ will adopt it in the next time step. This probability is precisely same as the probability that the live edge of $v$ comes from one of the nodes in $S^{(t)}\setminus S^{(t-1)}$, given that it did not came from $S^{(t-1)}$. This probability is also given by:
		
		\begin{equation*}
		\frac{\sum_{w\in S^{(t)}\setminus S^{(t-1)}}b_{v,w}}{1-\sum_{w\in S^{(t-1)}}b_{v,w}}.
		\end{equation*}
		
		So by induction we find that the two processes define the same distribution over the active sets. Next we provide the proof for when the number of behaviors $k>1$. Notice that~\cite{Kempe03} only proved the case of $k=1$.
	    
		We prove the claim by induction on the time step $t$. Clearly the claim is true for $t=0$. We define $S^{(t)}_{i}$ as the set of active nodes with behavior $i$ at the end of time step $t$ of the sticky
		multiple behavior diffusion model. Let $S^{(t)}:=\cup_{i=1}^{k}S^{(t)}_{i}$.  That is, $S^{(t)}$ is the set of nodes that have adopted at least one behavior.  Notice that $S^0=S$. Let $v$ be a node that is not active at the end of time step $t$ and $\kappa(v)=\kappa \ne0$, where $\kappa(v)$ is the number of distinct behaviors $v$ can adopt one at a time. Then the probability that $v$ will become active at the end of time step $t+1$ is equal to the probability that the nodes in $S^{(t)}\setminus S^{(t-1)}$ will push the influence weight of at least one of the first $\kappa$ behaviors over its corresponding threshold value, given that none of those thresholds were already crossed. This probability is 
		 
		 \[
		 1-\prod_{i=1}^{\kappa}\left(1-\frac{\sum_{w\in S^{(t)}_{i}\setminus S^{(t-1)}_{i}}b_{v,w}}{1-\sum_{w\in S^{(t-1)}_{i}}b_{v,w}}\right)
		 \]

		On the other hand we run the live edge reachability process as described above and denote by $S^{(t)}_{i}$ the set of all nodes with behavior $i$ at the end of time step $t$. Let $S^{(t)}:=\cup_{i=1}^{k}S^{(t)}_{i}$. If node $v$ is not active at the end of time step $t$ with $\kappa(v)=\kappa \ne 0$, then the probability that it will be active at the end of time step $t+1$ is equal to the probability that at least one of its $\kappa$ live edges comes from the nodes of $S^{(t)}\setminus S^{(t-1)}$ (with the corresponding behavior), given that none of those live edges came from $S^{(t-1)}$. This probability is also given by - 
		 
		 \[
		 1-\prod_{i=1}^{\kappa}\left(1-\frac{\sum_{w\in S^{(t)}_{i}\setminus S^{(t-1)}_{i}}b_{v,w}}{1-\sum_{w\in S^{(t-1)}_{i}}b_{v,w}}\right)
		 \]
		 
		By induction over the time step of the process we see that the distribution over the active sets at the end of the sticky multiple behavior diffusion process is same as the distribution produced by the alternative live edge process.
	\end{proof}
	    
	Let us define $\sigma'(S)$ as the expected number of active nodes at the completion of the alternative random process.  By the previous lemma \ref{lemma:alternative-proc}, we can show that $\sigma'(S)=\sigma(S)$, for all seed sets $S$ under same distribution of behaviors on the seed set, and where $\sigma(S)$ is the total participation (expected number of active nodes, active for any behavior) for the sticky multiple behavior adoption process. 
	
	Now we show that $\sigma'(.)$, the total participation function is submodular. Let $X$ be a particular choice of live/blocked edges for all nodes. Let $\sigma'_X(S)$ denote the cardinality of the set of active nodes at the completion of the alternative process. Let $R(v,X)$ denote the set of nodes reachable from seed node $v$ that satisfy the following condition --- any node in $R(v,X)$ has a behavior seeded by $v$ and the node is connected to $v$ by a live edge path for that behavior under the choice $X$.  Therefore $\sigma_X'(S)=|\cup_{v\in S}R(v,X)|$.
	 
	First we will show that for a fixed choice $X$, $\sigma'_X(.)$ is submodular. Let $S$ and $T$ be two sets of nodes such that $S\subseteq T$ and $v$ is any node. Let us consider $\sigma'_X(S\cup\{v\}) - \sigma'_X(S)$. This is the number of nodes that are in $R(v,X)$ but not in $\cup_{u\in S}R(u,X)$. This number is at least as large as the number of nodes in $R(v,X)$ but not in the bigger union $\cup_{u\in T}R(u,X)$. Therefore it follows that $\sigma'_X(S\cup\{v\})-\sigma'_X(S)\ge \sigma'_X(T\cup \{v\})-\sigma'_X(T)$.
	 
	Finally we have 
	 
	 \[
	 \sigma'(S)=\sum_{\mathrm{outcomes\, }X} \mathrm{Pr}[X].\sigma'_X(S) 
	 \]
	 
	Since a non-negative linear combination of submodular functions is also submodular, $\sigma'(.)$ is submodular. This completes our proof.
\end{proof}

\subsubsection{The Approximation Algorithm}
We are interested in obtaining an approximation guarantee for the total participation maximization problem under the Sticky multiple behavior diffusion model. For this type of optimization problems involving submodular functions there is a greedy algorithm that approximates the optimum within a factor or $(1-1/e-\epsilon)$, where $e$ is the base of natural logarithm and $\epsilon$ is any positive real number (\cite{Nemhauser78}, \cite{Kempe03}). So the approximation algorithm gives a performance guarantee of at least $63\%$ of the optimum. We modify the basic greedy algorithm to adapt it to the multiple behavior case (Algorithms \ref{algo:approx-sticky}, \ref{algo:core}). 

\begin{algorithm}[h]
\KwIn{$G:=(V,E)$, the social network; $\mathbf{b}$, a vector of size $k$ containing number of required seeds for each of the behaviors.} 
\KwOut{$\mathbf{S}$, a vector of size $k$ containing seed sets of required size for all the behaviors.} 
Let $V':=V$ and $\mathbf{S:=}\boldsymbol{\phi}$\; 
\Repeat{$\mathbf{b}=\mathbf{0}$}{     
	\For{each behavior $i$}{        
		Let $(u_i , s_i ):=$Core-Greedy($i, \mathbf{b}[i], \mathbf{S}, V'$) \; \nllabel{line:inc-core-hill-climbing}
    }     
	Let $i_{max}:=\argmax_{i\in \{1,\ldots,k\}}s_i $ \; 
	Let $v:=u_{i_{max}}$ \;   
	Set $\mathbf{S}[i_{max}] := \mathbf{S}[i_{max}] \cup \{ v \}$ and $\mathbf{b}[i_{max}] := \mathbf{b}[i_{max}] - 1$\;  
	Set $V':=V'-v$ \;        
	\If{$r(v) \le c_{i_{max}}$} {
       Set $r(v):=c_{i_{max}}$\;
    } 
} 
\caption{Approximation algorithm for the sticky multiple behavior diffusion model} 
\label{algo:approx-sticky}
\end{algorithm}

In Algorithm~\ref{algo:approx-sticky}, in line 4, we obtain the node $u_i$ associated with the maximum spread $s_i$ for behavior $i$ via the \texttt{Core-Greedy} algorithm. Then, in lines 6--7, we identify the behavior $i_{\max}$ with the maximum spread and the corresponding node $u_{i_{\max}}$ that was the seed for the behavior. This node $v$ is then added to the set of seeds for that behavior ($i_{\max}$, line 8) and then removed from the set of nodes to be considered as seeds in the next round (line 9). In line 11, if the resource for the seed is less than the cost of the behavior, we "top-up" the resource of the seed so that it can adopt the behavior.  Notice that to avoid "topping-up" we can simply ignore nodes $v$ that have resources less than the cost of the behavior to be adopted $r(v) < c_i$ when examining nodes in the  \texttt{Core-Greedy} algorithm. 

\begin{algorithm}[h]
\KwIn{$i$ the behavior; $\mathbf{b}[i]$, the number of seeds required for the $i$th behavior; $\mathbf{S}$, the set of already selected seeds for all the behaviors; $V'$ the remaining population of nodes from where one chooses new seeds.} 
\KwOut{$(u,s)$ if $\mathbf{b}[i]$ is not zero then a tuple consisting of $u$, the best choice of seed from the population $V'$ for the $i$th behavior, given the already selected seedset $\mathbf{S}$, and $s$ its corresponding spread value (total participation).}
   
\If {$\mathbf{b}[i] = 0$} {
	Return (`nobody',$0$)
}
Let $\mathbf{s}$ be a vector indexed by the set $V'$, and $\mathbf{s}=\mathbf{0}$ \; 
\For {$v \in V'$} {
	Let $\mathbf{S'} := \mathbf{S}$ \;
	Set $\mathbf{S'}[i] := \mathbf{S'}[i] \cup \{v\}$ \;
	Set $\mathbf{s}[v] :=$ Estimate-Spread($\mathbf{S'}$) \; \nllabel{line:inc-estimate-spread}
}
Select $u:=\argmax_{v}\{\mathbf{s}[v]|v \in V'\}$ \;    
Return $(u,\mathbf{s}[u])$;
\caption{Core-Greedy algorithm used in the approximation algorithm for the sticky model} 
\label{algo:core} 
\end{algorithm}

In Algorithm~\ref{algo:core}, we present the \texttt{Core-Greedy} algorithm which selects seeds given a behavior, the set of nodes from which to choose the ``best" node, the set of nodes already chosen to be seed nodes.  In line 8, we estimate the spread of the behavior through a stochastic simulation: given a specific node to be selected as the seed node, we compute the expected spread via a simulation; in each run, we assign each node with a threshold picked from $U(0,1)$ and then let the behavior spread by assessing for each node that hasn't yet adopted the behavior whether it will adopt the behavior. Finally in lines 10--11 we select the node with the highest expected spread and return this tuple. 

In this section we formally defined the seed selection problem and showed that the problem is NP-hard. Then in Section~\ref{sub-sec:approx-algo} we showed the total participation function was sub-modular leading to a greedy seed selection algorithm.

Although we achieve an approximation guarantee for the seed selection problem, the time complexity of the approximation strategy is $O(n^2kb)$, where $n$ is the number of vertices in the graph, $k$ is the number of behaviors, and $b$ is the number of seeds required. Moreover the constant is  large (\citeN{Kempe03} use 10,000 simulations to estimate the expected spread) since we need to simulate the diffusion process multiple times to estimate the spread throughout the algorithm. This makes the approximation algorithm impractical for networks of large size and motivates us to explore cheap heuristics that perform reasonably well for practical instances. Different efficient heuristics for the seed selection problem is the topic of the next section.

\section{Seed Selection: Heuristics and Experiments} 
\label{sec:heuristics}
In this section we develop heuristics to identify seed nodes for behavior diffusion in a resource constrained social network and then show experimental results that help us analyze the impact of each heuristic.

\subsection{Summary of Variation \& Notation }
\label{subsec:notation}
There are a number of variations possible---in terms of resources and behaviors---for nodes selected as seeds. The first issue is whether we \textit{top up} the resource of a selected seed. \textit{Topping up} a seed  provides it with additional resources to adopt the behavior in case the behavior was more costly than available resources. This corresponds to real life events like providing early adopters with with free items, gift coupons or other services like free access to recycling facilities etc. Depending on whether we allow seeds to be topped up we have two variations of the seed selection algorithm---\textit{Topped Up} (suffix \textbf{T} is added to the name of the algorithm) and \textit{No Top Up} (suffix \textbf{NT} is added). In the \textbf{NT} version only the nodes with sufficient resource for adopting a behavior are considered as candidates for seed selection. On the other hand in the \textbf{T} version all the nodes
are considered as possible candidates.

Another variation is possible depending on whether a node can be selected as a seed for more than one behavior or strictly one behavior. In the first case a seed may be assigned more than one behaviors and we suffix \textbf{M} (for \textit{multiple}) to the seed selection algorithm. In the second case a seed is assigned exactly one behavior and we use the suffix \textbf{S} (for \textit{single}). It is easy to see that \textbf{S} version can never find a solution that is better than the \textbf{M} version for the same type of top up regime.

Combining these two types of variations we can have four different variants of each seed selection algorithm - \textbf{S-T}, \textbf{S-NT}, \textbf{M-T} and \textbf{M-NT}. In this paper most of the results are for the \textbf{S-T} variant. However in Appendix \ref{sec:t-nt-comp} we present some results comparing these different variations of the seed selection algorithm and discuss a few consequences.

\subsection{Seed Selection Heuristics}
\label{subsec:heuristics}
We discuss heuristics based on node degree, influence weight  and expected immediate adoption for the behavior diffusion model under resource constraint. We start by developing heuristics which are based on node degree. 
\subsubsection{Node Degree}
The social capital of an individual increases the number of acquaintances. While the nature of the connections and the specific structure of the network in which an individual is embedded play a role in determining the influence of an individual, an individual's node degree is a good first degree approximation to an individual's ``influence'' on his acquaintances. We first discuss the heuristic and present some useful variants.

\begin{description}
\item[\textbf{Na\"ive}] We rank the nodes according to their degree, pick top $k$ nodes, and assign them different behaviors. This is a na\"ive extension of the high degree heuristic for the LT model \cite{Kempe03}. We test three variants of this heuristic. The first variant is \textit{na\"ive degree with random tie breaking and no top up} (see Algorithm~\ref{algo:naive-rand-no-topup}). Here \textit{no top up} means that a seed node is never assigned a behavior that it cannot adopt with its own resource, i.e. a seed node is assigned a behavior only if it has sufficient resource to adopt it. \textit{Random tie breaking} means that we assign a high degree node a behavior that is chosen uniformly at random from the set of behaviors that the node can adopt with its available resource. So each seed node is assigned at most one behavior (if a seed node does not have enough resource to adopt any behavior, no behavior is assigned to it).
In Algorithm~\ref{algo:naive-rand-no-topup} the loop (lines 2-9) continues until we have no more nodes to identify as seeds. In the second variant \textit{na\"ive degree with random tie breaking and top up} each seed node is always assigned one randomly chosen behavior irrespective of its resource level. If its resource is not sufficient for adoption of the behavior we top up its resource so that it can bear the cost of adoption of the assigned behavior. In the third variant \textit{na\"ive degree with knapsack tie breaking}, each seed node is assigned all the behaviors that will maximize its utility subject to its resource constraint---to determine which behaviors to assign we solve a knapsack problem. Notice that degree based heuristics are optimistic---it is possible that neighbors of a seed do not have resources to adopt the behavior of the seed, thus preventing diffusion of behavior.

\begin{algorithm}[t]
\KwIn{$G:=(V,E)$ - the social network, $\mathbf{b}$ - a vector of size $k$ containing number of required seeds for each of the behaviors}
\KwOut{$\mathbf{S}$ - a vector of size $k$ containing seed sets for each of the $k$ behaviors}
Let $V':=V$ and $\mathbf{S}:=\boldsymbol{\phi}$\;
\Repeat{$\mathbf{b}=\mathbf{0}$}{
    Select $v:=\argmax_u\{|N(u)| : u\in V'\}$\;
    $V':= V' \setminus \{v\}$ \;
    Select $j$ uniformly at random from the set of behaviors $i$ that still need seeds to be assigned: $\{i:\mathbf{b}[i] \ne 0 \}$ \;
    \If{$r(v) \ge c_j$} {
       Set $\mathbf{S}[j] := \mathbf{S}[j] \cup \{v\}$ and $\mathbf{b}[j] := \mathbf{b}[j] - 1$\;
       Designate $v$ as an early adopter for behavior $j$\;
    }
   }
\caption{Na\"ive Degree Based with Random Tie breaking and No Top Up}
\label{algo:naive-rand-no-topup}
\end{algorithm}

\item[\textbf{Neighbors With Sufficient Resource}]  This heuristic takes into account both the degree and available resource of the neighbors when selecting the seed nodes. For each behavior $i$ we calculate $d_i (v)$ --- the number of neighbors of a node $v$ with sufficient resource for adoption of behavior $i$ with cost $c_i$ (i.e. the number of neighbors $u$ of seed node $v$ with $r(u) \ge c_i $). Clearly $d_i (v)$ is a better indicator of the suitability of selecting $v$ as a seed for the $i$th behavior than just the node degree. In the \textit{degree and resource ranked} heuristic (see Algorithm~\ref{algo:degree-resource}) we compute $d_i (v)$ for all the nodes (lines 2-10), rank them according to the value of this metric and select the required number of seeds for the $i$th behavior from the top of the ranking. If a node is selected as a candidate seed for more than one behaviors, we break the tie randomly and top up its resource so that it can adopt the randomly assigned behavior (lines 19-22). Then we add one more candidate for the behaviors that were not assigned. We repeat the process until the required number of seeds are selected for all the behaviors. For example if we need to identify $3$ seeds for behavior $1$, $4$ for behavior $2$, and $3$ for behavior $3$. Assume that the second ranked node on the list for behavior $1$ appears in all the other lists. Suppose when we break the tie the node is assigned behavior $3$. Then we pick the one additional node for each behavior $1$ and $2$ with the next highest value of $d_i(v)$.

\end{description}
One weakness of the degree based and degree-resource based heuristics is that they provide no estimate of the effectiveness of the seed in terms of adoptions. We address this issue next. 

\begin{algorithm}[t]
\KwIn{$G:=(V,E)$ - the social network, $\mathbf{b}$ - a vector of size $k$ containing number of required seeds for each of the behaviors}
\KwOut{$\mathbf{S}$ - a vector of size $k$ containing seed sets of required size for all the behaviors}
Let $d_i (v):=0$ for all $v\in V$ and $i \in \{1,\ldots,k\}$\;
\For{each $v \in V$}{
    \For{each behavior $i$}{
       \For{each neighbor $u$ of $v$}{
          \If{$r(u)\ge c_i $}{
             $d_i (v):=d_i (v)+1$\;
          }
       }
    } 
}
Let $V':=V$ and $\mathbf{S:=}\boldsymbol{\phi}$\;
\Repeat{$\mathbf{b}=\mathbf{0}$}{
    \For{each behavior $i$}{
       Let $T_i $ be the set of top $\mathbf{b}[i]$ nodes from $V'$ in the decreasing sorted order of $d_i (v)$\; 
    }
    Let $T:=\cup_{i=1}^{k}T_i $\;
    Set $V':=V'\setminus T$ \;
    \For{each node $v$ in $T$}{
       Select $j$ uniformly at random from the set of behaviors \{$i | v\in T_i $ \} \;
       \If{$r(v) \le c_j$} {
          Set $r(v):=c_j$\;
       }
       Set $\mathbf{S}[j] := \mathbf{S}[j] \cup \{v\}$ and $\mathbf{b}[j] := \mathbf{b}[j] - 1$\;
       Designate $v$ as an early adopter for behavior $j$\;
    }
   }
\caption{Degree and Resource Ranked Heuristic}
\label{algo:degree-resource}
\end{algorithm}

\subsubsection{Influence Weight Based Heuristics}
We compute an influence weight measure to estimate the influence of a potential seed set on its neighbors.  We can compute the influence weight measure for a set of seeds by summing over the influence weight of individual seeds.  Let $u$ be a neighbor of $v$. $v$ exerts a social influence of weight $\scriptstyle 1/|N(u)|$ on $u$. 
The Influence Weight exerted by $v$ on its neighbors is $\scriptstyle \sum_{u \in N(v)} \frac{1}{|N(u)|}$.  Notice that while social capital of a node $v$ is given by $N(v)$ the number of neighbors of $v$, its net social influence depends on the number of neighbors of its immediate neighbors. That is social influence of a node $v$ is $\scriptstyle \propto  \sum_{u \in N(v)} \frac{1}{|N(u)|}$.

We will restrict the summation over those neighbors $u$ that have enough resource to adopt behavior $i$. Hence we call this metric \textit{Constrained Social Influence Weight (CIW)} of $v$ for the behavior $i$ and denote it by $e_i (v)$. The justification of this heuristic comes from \cite{Watts07}, where it is argued that large cascades are driven by a critical mass of easily influenced individuals. However, we note that \citeN{Watts07} did not consider resource bounded individuals in their framework. Intuitively, higher $e_i(v)$ includes the possibility that $v$ is connected to individuals who can be easily influenced by $v$ to adopt behavior $i$ which can potentially lead to a large cascade of behavior $i$.  Next we describe two heuristics based on the CIW. 
 
 \begin{description}
\item[\textbf{Rank Based}] We rank all the nodes (see Algorithm~\ref{algo:ond-step-ranked}) based on the value of $e_i (v)$ (lines 2--10) and choose the required number of seeds for behavior $i$ starting from the highest ranked nodes. We perform the same evaluation for all behaviors. If a node is selected as a candidate seed for more than one behaviors, then we pick one of the behaviors at random and assign it to the node (line 19).  And similar to Algorithm~\ref{algo:degree-resource}, for all behaviors not selected we pick from the remaining nodes (i.e. $v \in V'$) with the next highest value of $e_i(v)$ and add it to the corresponding list (i.e. $T_i$ for a behavior not selected). If the node does not have sufficient resource to adopt that behavior then its resource is topped up. The process continues until the required number of seeds are allocated to all the behaviors.

\begin{algorithm}[t]
\KwIn{$G:=(V,E)$ - the social network, $\mathbf{b}$ - a vector of size $k$ containing number of required seeds for each of the behaviors}
\KwOut{$\mathbf{S}$ - a vector of size $k$ containing seed sets of required size for all the behaviors}
Let $e_i (v):=1$ for all $v\in V$ and $i \in \{1,\ldots,k\}$\;
\For{each $v \in V$}{
    \For{each behavior $i$}{
       \For{each neighbor $u$ of $v$}{
          \If{$r(u)\ge c_i $}{
             $e_i (v):=e_i (v)+\frac{1}{|N(u)|}$\;
          }
       }
    } 
}
Let $V':=V$ and $\mathbf{S:=}\boldsymbol{\phi}$\;
\Repeat{$\mathbf{b}=\mathbf{0}$}{
    \For{each behavior $i$}{
       Let $T_i $ be the set of top $\mathbf{b}[i]$ nodes from $V'$ in the decreasing sorted order of $e_i (v)$\; 
    }
    Let $T:=\cup_{i=1}^{k}T_i $\;
    Set $V':=V'\setminus T$ \;
    \For{each node $v$ in $T$}{
       Select $j$ uniformly at random from the set of behaviors \{$i | v\in T_i $ \} \;
       \If{$r(v) \le c_j$} {
          Set $r(v):=c_j$\;
       }
       Set $\mathbf{S}[j] := \mathbf{S}[j] \cup \{v\}$ and $\mathbf{b}[j] := \mathbf{b}[j] - 1$\;
       Designate $v$ as an early adopter for behavior $j$\;
    }
   }
\caption{CIW Rank Based}
\label{algo:ond-step-ranked}
\end{algorithm}

\item[\textbf{Max Margin}] The hill climbing heuristic builds the seed set incrementally, each time selecting a new seed node that maximizes the marginal increase of influence weight until the required number of seeds is obtained. So in this case while computing the influence weight of a potential seed node for a behavior we exclude those neighbors of the node who are already selected as seeds for that behavior. As with the previous heuristic, if the node does not have sufficient resource then it is topped up so that it can adopt the assigned behavior.

\begin{algorithm}[t]
\KwIn{$G:=(V,E)$ - the social network, $\mathbf{b}$ - a vector of size $k$ containing number of required seeds for each of the behaviors}
\KwOut{$\mathbf{S}$ - a vector of size $k$ containing seed sets of required size for all the behaviors}
Let $V':=V$ and $\mathbf{S:=}\boldsymbol{\phi}$\;
\Repeat{$\mathbf{b}=\mathbf{0}$}{
    \For{each behavior $i$}{
       Let $T_i :=$Core-Hill-Climbing($i, \mathbf{b}[i], \mathbf{S}[i], V'$) \;
    }
    Let $T:=\cup_{i=1}^{k}T_i $\;
    Set $V':=V'\setminus T$ \;
    \For{each node $v$ in $T$}{
       Select $j$ uniformly at random from the set of behaviors $i$ with $v\in T_i $ \;
       \If{$r(v) \le c_j$} {
          Set $r(v):=c_j$\;
       }
       Set $\mathbf{S}[j] := \mathbf{S}[j] \cup \{v\}$ and $\mathbf{b}[j] := \mathbf{b}[j] - 1$\;
       Designate $v$ as an early adopter for behavior $j$\;
    }
   }
\caption{CIW based Max Margin Heuristic}
\label{algo:one-step-hill-climbing}
\end{algorithm}

\end{description}

\begin{algorithm}[t]
\KwIn{$i$ - the behavior, $\mathbf{b}[i]$ - number of seeds required for the $i$th behavior, $\mathbf{S}[i]$ - the set of already selected seeds for the $i$th behavior, $V'$ - the remaining population of nodes to choose new seeds from}
\KwOut{$T_i $ - the set of $\mathbf{b}[i]$ newly selected seeds}
Let $e_i (v):=1$ for all $v\in V \setminus \mathbf{S}[i]$ and $i \in \{1,\ldots,k\}$\;
\For{each $v \in V \setminus S[i]$}{
   \For{each neighbor $u$ of $v$ s.t. $u \in V\setminus \mathbf{S}[i]$}{
      \If{$r(u)\ge c_i $}{
         $e_i (v):=e_i (v)+\frac{1}{|N(u)|}$\;
      }
   }
}
Let $T_i :=\phi$\;
\For{$j=1$ to $\mathbf{b}[i]$}{
    Select $u:=\argmax_{v}\{e_i (v)|v\in V' \setminus T_i \}$ \;
    $T_i :=T_i  \cup \{u\}$\;
    \For{each neighbor $v$ of $u$ in $V' \setminus T_i $}{
       $e_i (v):=e_i (v) - \frac{1}{|N(u)|}$\;
    }
}
\caption{Core-Hill-Climbing}
\label{algo:Core-One-Step-Hill-Climbing}
\end{algorithm}

\subsubsection{Expected Immediate Adoption}
Notice that the randomized behavior adoption process at a node in the network does not depend on any other node once the behaviors adopted by its neighbors are known. So for a seed set $\mathbf{S}$ we can compute the expected number of nodes with behavior $i$ in the next time step in the following manner --- for any node $v$ that is a neighbor of a seed node $u$ with behavior $i$, we compute the probability that in the next time step it will adopt behavior $i$, and then we sum up the probabilities for all such nodes $v$ in the network. Appendix \ref{sec:ia-comp} explains how to compute this probability with a detailed example. For a given seed set $\mathbf{S}$ and a behavior $i$, we define \textit{Expected Immediate Adoption(EIA)}, $\mathrm{IA}_i (\mathbf{S})$, as this expected number of nodes with behavior $i$ after one time step. Although the exact computation of \textit{total} number of adoptions at the completion of the behavior diffusion process is \#P-hard for the LT model \cite{Chen10}, the exact computation of adoptions after exactly one time step is a tractable problem (Appendix \ref{sec:ia-comp}). If we assume that higher values of $\mathrm{IA}_i (\mathbf{S})$ would result in higher values of expected number of adoptions of behavior $i$ at the completion of the diffusion process, then we can use $\mathrm{IA}_i (\mathbf{S})$ as a metric for determining the initial seed set. Instead of one step expected adoption numbers, one can use expected number of adoptions after two or more steps to better predict the final adoption. However, finding closed form expression is hard, and one would need to use a two-stage simulation to make the calculation.

We build up $\mathbf{S}$ incrementally adding one seed at a time based on the EIA value.  At each iteration we choose the node that maximizes the marginal increase in the EIA value to add to the seed set. See Algorithm \ref{algo:inc-IA-S-T} for detailed description.

\begin{algorithm}[h]
\KwIn{$G:=(V,E)$ - the social network, $\mathbf{b}$ - a vector of size $k$ containing number of required seeds for each of the behaviors} 
\KwOut{$\mathbf{S}$ - a vector of size $k$ containing seed sets of required size for all the behaviors} 
Let $V':=V$ and $\mathbf{S:=}\boldsymbol{\phi}$\; 
\Repeat{$\mathbf{b}=\mathbf{0}$}{     
    \For{each behavior $i$}{        
        Let $(u_i , s_i ):=$Find-Next-Seed-IA($i, \mathbf{b}[i], \mathbf{S}, V'$) \; \nllabel{line:next-best-seed-IA}
       }     
       Let $i_{max}:=\argmax_{i\in \{1,\ldots,k\}}s_i $ \; 
       Let $v:=u_{i_{max}}$ \;   
       Set $V':=V'\setminus \{v\}$ \;          
       Set $\mathbf{S}[i_{max}] := \mathbf{S}[i_{max}] \cup \{v\}$ and $\mathbf{b}[i_{max}] := \mathbf{b}[i_{max}] - 1$\;  
       \If{$r(v) \le c_{i_{max}}$} {
           Set $r(v):=c_{i_{max}}$\;
       } 
   } 
   \caption{Incremental Expected Immediate Adoption Based Heuristic} 
   \label{algo:inc-IA-S-T}
\end{algorithm} 

\begin{algorithm}[h]
\KwIn{$i$ - the behavior, $\mathbf{b}[i]$ - number of seeds required for the $i$th behavior, $\mathbf{S}$ - the set of already selected seeds for all the behaviors, $V'$ - the remaining population of nodes to choose new seeds from} 
\KwOut{$(u,s)$ - if $\mathbf{b}[i]$ is not zero then a tuple consisting of the best choice of next seed from the population $V'$ for the $i$th behavior, given the already selected seedset $\mathbf{S}$ and its corresponding Expected Immediate Adoption value}

\If {$\mathbf{b}[i] = 0$} {
    Return (`nobody',$0$)
}
Let $\mathbf{s}$ be a vector indexed by the set $V'$, and $\mathbf{s}=\mathbf{0}$ \; 
\For {$v \in V'$} {
	Let $\mathbf{S'} := \mathbf{S}$ \;
	Set $\mathbf{S'}[i] := \mathbf{S'}[i] \cup \{v\}$ \;
    Set $\mathbf{s}[v] :=$ Compute-IA($\mathbf{S'}$) \; \nllabel{line:compute-IA}
}
Select $u:=\argmax_{v}\{\mathbf{s}[v]|v \in V'\}$ \;    
Return $(u,\mathbf{s}[u])$;
\caption{Find-Next-Seed-IA heuristic; selects the node that gives maximum marginal increase of the Immediate Adoption value} 
\label{algo:inc-Core-Hill-Climbing} 
\end{algorithm}

\subsubsection{Greedy Approximation (KKT)}
\citeN{Kempe03} presents a greedy approximation algorithm with approximation guarantee of $63$\% for the LT model and single behavior case. In section \ref{sub-sec:approx-algo} we have shown that a modified version of this algorithm provides us with the same approximation guarantee for the simplified Sticky Model.  In particular, we showed that our algorithm under the total participation diffusion metric was sub-modular and hence we shall use this algorithm to select seeds. However, since we did not show that behavior diffusion to be sub-modular and hence prove the approximation guarantee under the the more general case of total adoption or under the resource utilization metric, using our algorithm may be less than optimal. Hence we denote our algorithm as a heuristic---we call it the KKT heuristic after the authors of~\cite{Kempe03}. note that due to the high computational cost involved in simulation this algorithm is not scalable to large size networks.

\subsection{Simulation Experiments} \label{sub-sec:sim-exp}
In this section we describe different simulation experiments and compare the effectiveness of our proposed heuristics for the seed selection problem. We have implemented the multiple behavior diffusion model described in Section~\ref{sub-sec:model} and the heuristics discussed in Section~\ref{sec:heuristics} in the NetLogo Programming environment~\cite{Wilensky99}. In the following experiments we have assumed that we want to spread three behaviors ${b_1, b_2, b_3}$ with costs $c_1=0.2$, $c_2=0.5$ and $c_3=0.7$. We have assumed that behavior utility is proportional to cost. Hence our nominal utility values for the corresponding behaviors are $u_1=0.2$, $u_2=0.5$ and $u_3=0.7$. Finally, we assume that individuals' resources are independent and identically distributed i.e the resource $r(v)$ is uniformly distributed random variable $U(0,1)$ for all $v \in V$.

\subsubsection{Network Topologies}
We have used synthetic networks as well as a large real-world network for our experiments.  We synthesize network topologies through three social network generation models: preferential attachment \cite{barabasi99}; Small-world \cite{watts98} and spatially clustered \cite{Stonedahl08}. All the synthetic networks have $500$ nodes. In the preferential attachment network each new coming node adds one link to one of the existing nodes according to the in-degree distribution. The small world network formation starts with a regular ring lattice where each node is connected to two adjacent nodes in the circular order. In the rewiring stage each edge is rewired with probability $p=0.2$. In the spatially clustered network average node degree is set to $10$. The three synthetic networks exhibit all the important properties---low effective diameter, power law degree distribution and high clustering---found in real world social networks. The real world data set is the ca-GrQc collaboration network form the SNAP network database \cite{Leskovec07a}. It is a collaboration network amongst authors who submitted their papers to the General Relativity and Quantum Cosmology category of e-print arXiv.org database. This network has $5242$ nodes and $28980$ edges.

The network types are abbreviated in the tables with experimental results as follows: PA (Preferential Attachment);  SW (Small World); SC (Spatially Clustered); QC (the ca-GrQc quantum cosmology collaboration network form the SNAP network database).
 
\subsubsection{Empirical Evaluation}
In this section we compare the seven seed selection heuristics described in Section~\ref{subsec:heuristics} for different network topologies.  For the seed selection experiments, we fix the behavior distribution over the seeds: the behaviors are assumed to be uniformly distributed over the seeds. We use a specific fraction $\alpha$ of the population as seeds. In this experiment, we have used $\alpha=0.1$ in line with prior work ~\cite{Watts07}. This means that for synthetic networks, we use $b=51$ seeds\footnote{the number of seeds is a multiple of 3, since we have 3 test behaviors}, and $b=501$ for the real-world network. All the results discussed in this section are for the \textbf{S-T} variant of the algorithm\footnote{In Appendix \ref{sec:t-nt-comp} we present the result of comparison between the different variants}. As a reminder the \textbf{S-T} variant is the case when each seed node adopts a single behavior with top-up. Full description of the notations can be found in Section~\ref{subsec:notation}.

There are two sources of randomness in the synthetic network generation models:  behavior adoption thresholds at each individual for each behavior and network topology. Since each aspect is independent of the other, we have conducted two different types of simulations. In the first, we pick an arbitrary topology and vary individual thresholds over the different simulation runs. We term this as \textit{threshold average}. In the second type of simulation, we fix the individual thresholds, obtained from the uniform distribution, and vary the topologies over the simulations. We term this as \textit{network average}. Notice that the real-world dataset---ca-GrQc network---has a fixed topology and hence only one type of randomness: variation of the individual thresholds.  We use $5000$ independent runs of the diffusion process to obtain stable estimates for both threshold and network types of simulations.

Interestingly, simulation under both network and threshold average yield identical results (see Table~\ref{tab:max-util}). While the reader can find a formal proof of equivalence between the network average and threshold average simulations in Appendix~\ref{app:equiv}, we present an informal argument here. The case for $k-$regular graphs is easy to see. No matter the topology, every node will experience the same distribution of thresholds as any other node as the number of simulations tends to infinity. In the more general case, consider the network thresholds case when topologies are varied under fixed (but randomly assigned) thresholds. Since every topology is the output of a random graph generation process (e.g. small world), every topology follows a characteristic degree distribution (depending on the graph generation parameters). Thus any node with a fixed threshold has a finite probability of having all the possible different degrees. Since the thresholds are chosen uniformly at random, across all nodes, every threshold value will ``experience" different node degrees (i.e. number of neighbors) over the course of the entire simulation. A similar argument follows for the threshold average case when the topology is random but fixed and when the thresholds are chosen randomly for every simulation run. In the remainder of the paper, we shall only show threshold averages for the sake of definiteness.

The eight heuristics are abbreviated in the experimental results tables as follows: Random (H1); Na\"ive Degree---No Top-up (H2); Na\"ive Degree---Knapsack (H3); Na\"ive Degree---Top-up (H4); Degree and Resource Ranked (H5), Constrained Social Influence Weight---Ranked (IWR), Constrained Social Influence Weight---Max Margin (IWM), Expected Immediate Adoption (EIA).

\begin{table}[htb]
\footnotesize
\centering
    \caption{Maximum Possible Resource Utilization of different network types.  Each node solves the knapsack problem and selects optimal behaviors. Then, we diffuse the behaviors. We are reporting the equilibrium values under two conditions: we fix the thresholds and vary topology (Network Average); we fix a random topology and vary thresholds (Threshold Average). Notice that the the quantum physics collaborative dataset, we cannot report a network average since the topology is fixed.  Maximum resource utilization occurs when the number of seeds is equal to N---each node is a seed. In this case, it is easy to show that resource utilization is 0.78 for the three behavior case with specific costs. Notice below that resource utilization is less than this number, since each node will adjust to its social signal. }\label{tab:max-util}
    \begin{tabular}{lcc} \toprule
        Network & Threshold Average & Network Average \\ \midrule
        (PA) Preferential Attachment 	& 0.71 & 0.71	\\
        (SW) Small World 			& 0.72 & 0.72 	\\
        (SC) Spatially Clustered 		& 0.73 & 0.73	\\
        (QC) Quantum Cosmology 		& 0.73 & N/A 	\\ \bottomrule
    \end{tabular}
    \end{table}

Since seed selection sub-problems P1 and P2 are NP-complete (ref. Section~\ref{sec:heuristics}), determining the maximum possible utilization or total participation in the network for the given value of $b$ under uniform behavior distribution is computationally intractable. However, we can estimate the value of maximum possible utilization in the network if we assume that $b=N$, the case when each network node is a seed. First the nodes in the network adopt the subset of behaviors that maximizes their payoff subject to the resource constraint.  Then we let the diffusion process run till the network reaches equilibrium. The expected value of the resource utilization at this point will upper bound of resource utilization in that network and enables comparison with our heuristics. Table \ref{tab:max-util} provides the value of this maximum possible utilization for different networks. Notice that for three behaviors with costs $c_1=0.2$, $c_2=0.5$ and $c_3=0.7$, it is straightforward to show that the maximum utilization will be bounded by the value 0.78, assuming that the thresholds are obtained from $U(0,1)$. The fact that the simulation results are slightly lower that 0.78 is because nodes will ``align'' with their neighbors over time due to the social influence.

Let us examine the resource utilization for our cost distribution $\bm{c} = (0.2, 0.5, 0.7)$ in a little more detail. We assume that an individual's resources are drawn from a uniform distribution $U(0,1)$. If we assume that every individual can adopt any subset of available behaviors provided they have the resources, independent of the behavior adoption by their neighbors, we can thus estimate the upper bound for resource utilization in the social network. Given the distribution of costs, not all individuals can fully utilize their resources. However, it is easy to see that individuals with resources exactly equal to one of the numbers from the set $\{0.2, 0.5, 0.7, 0.9 \}$ are able to fully utilize their available resources.  We call this the set of  \textit{full utilization points}. Now, an individual with resource say 0.4 can only adopt behavior with cost 0.2 and so on. Thus the expected resource utilization is: 

\begin{equation*}
\frac{\int_{0.2}^{0.5}0.2 \mathrm{d}r + \int_{0.5}^{0.7}0.5 \mathrm{d}r + \int_{0.7}^{0.9}0.7 \mathrm{d}r + \int_{0.9}^{1.0}0.9 \mathrm{d}r  }{\int_{0.0}^{1.0}r \mathrm{d}r} = 0.78
\end{equation*}

The denominator of the equation is the value of resource utilization when an individual has behaviors that exactly match available resources. The costs of the different behaviors completely specifies the set of full utilization points.  It is straightforward to prove the following lemma.

\begin{lemma}
Let there be $n$ utilization points $\mu_{1}, \mu_{2}, \dots, \mu_{n}$ with $0 < \mu_{1} < \mu_{2} < \dots, < \mu_{n} \leq 1.0$. Then the maximum resource utilization is $2(\mu_{1}\mu_{2} + \cdots + \mu_{n-1}\mu_{n} + \mu_{n}) - 2(\mu_{1}^{2}+ \cdots + \mu_{n}^{2})$.
\end{lemma}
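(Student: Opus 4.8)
The plan is to reduce the statement to a single one‑variable integral. In the idealized setting used to define maximum utilization, an individual with resource $r\sim U(0,1)$ may adopt any subset of behaviors subject only to the bound $r$, so she spends exactly the largest utilization point not exceeding $r$; call this quantity $s(r)$, with the convention $s(r)=0$ when $r<\mu_{1}$. Concretely $s(r)=\mu_{j}$ for $r\in[\mu_{j},\mu_{j+1})$ with $j=1,\dots,n-1$, and $s(r)=\mu_{n}$ for $r\in[\mu_{n},1]$. As set up in the preceding derivation, the maximum resource utilization is the ratio of expected spent resource to expected available resource, i.e. $\bigl(\int_{0}^{1}s(r)\,dr\bigr)\big/\bigl(\int_{0}^{1}r\,dr\bigr)$; since $\int_{0}^{1}r\,dr=\tfrac12$, this equals $2\int_{0}^{1}s(r)\,dr$.

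Next I would evaluate $\int_{0}^{1}s(r)\,dr$ by splitting $[0,1]$ at the utilization points: the interval $[0,\mu_{1})$ contributes nothing, each interval $[\mu_{j},\mu_{j+1})$ contributes $\mu_{j}(\mu_{j+1}-\mu_{j})$, and the tail interval $[\mu_{n},1]$ contributes $\mu_{n}(1-\mu_{n})$. Summing gives $\int_{0}^{1}s(r)\,dr=\sum_{j=1}^{n-1}\mu_{j}(\mu_{j+1}-\mu_{j})+\mu_{n}(1-\mu_{n})$.

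Finally I would expand and regroup the terms: the products $\mu_{j}\mu_{j+1}$ assemble into $\mu_{1}\mu_{2}+\mu_{2}\mu_{3}+\cdots+\mu_{n-1}\mu_{n}$, the factor $\mu_{n}(1-\mu_{n})$ supplies both the isolated $+\mu_{n}$ (coming from the tail interval up to $1$) and a $-\mu_{n}^{2}$, and all the squared terms collect into $-(\mu_{1}^{2}+\cdots+\mu_{n}^{2})$. Multiplying by the factor $2$ then yields $2(\mu_{1}\mu_{2}+\cdots+\mu_{n-1}\mu_{n}+\mu_{n})-2(\mu_{1}^{2}+\cdots+\mu_{n}^{2})$, as claimed. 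There is no genuine obstacle; the only points requiring care are the two boundary intervals $[0,\mu_{1})$ and $[\mu_{n},1]$ and keeping the telescoping‑style expansion honest. As a sanity check, taking $(\mu_{1},\mu_{2},\mu_{3},\mu_{4})=(0.2,0.5,0.7,0.9)$ reproduces the value $0.78$ computed earlier in the text.
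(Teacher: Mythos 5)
Your proposal is correct and follows essentially the same route as the paper: both express the maximum utilization as $\bigl(\int_{0}^{1}s(r)\,dr\bigr)/\bigl(\int_{0}^{1}r\,dr\bigr)$, split the numerator at the utilization points so each interval $[\mu_{j},\mu_{j+1})$ contributes $\mu_{j}(\mu_{j+1}-\mu_{j})$ and the tail contributes $\mu_{n}(1-\mu_{n})$, and expand. Your explicit handling of the boundary interval $[0,\mu_{1})$ and the sanity check against the $0.78$ example are fine additions but do not change the argument.
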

\begin{proof}
The expected value of the resource utilization is: 
\begin{align}
& \hphantom{{} = 1} \frac{\int_{\mu_{1}}^{\mu_{2}} \mu_{1} \mathrm{d}r + \int_{\mu_{2}}^{\mu_{3}} \mu_{2} \mathrm{d}r + \cdots + \int_{\mu_{n}}^{1.0} \mu_{n} \mathrm{d}r} {\int_{0.0}^{1.0}r \mathrm{d}r} \notag\\
& = 2 (\mu_{1}(\mu_{2} - \mu_{1}) + \cdots + \mu_{n-1}(\mu_{n} - \mu_{n-1}) + + \mu_{n}(1.0 - \mu_{n}) )\notag\\
& = 2 (\mu_{1}\mu_{2} + \cdots + \mu_{n-1}\mu_{n} + \mu_{n}) - 2(\mu_{1}^{2} + \mu_{2}^{2}+ \cdots + \mu_{n}^{2}) \notag.
\end{align}
\end{proof}

By setting partial derivatives with respect to the utilization points of the maximum resource utilization function to 0, it is straightforward to show that the utilization points must be uniformly distributed. That is, $\mu_{i} = i / (n+1)$ and the corresponding maximum resource utilization value is $2/(n+1) \times \sum_{i=1}^{n}\mu_{i} = n / (n+1)$. Since $\mu_{1}$ always represents the lowest cost behavior, by simply introducing a new lower cost behavior with cost $\mu_{1}/2$, we increase the number of utilization points from $n \rightarrow 2n+1$, and increasing the relative utilization by $1/2n$.  As a specific example with three utilization points, $\bm{\mu} = (0.25, 0.5, 075)$ the costs for the corresponding two behaviors are $\bm{c} = (0.25, 0.5)$, with maximum resource utilization of $3/4$. When we introduce a new behavior with cost equal to half the cost of the lowest cost behavior (i.e. 0.125), the number of utilization points jumps to 7, thereby increasing the utilization to $7/8$ and increasing relative utilization by $1/6$ or 16.7\%.   

\begin{table}[htb]\footnotesize
\centering
    \caption{Resource Utilization under Threshold  Average. Both heuristic variants of the Influence Weight give excellent results. The differences between the heuristics for the same type of average are statistically significant. * result is obtained with 50 independent runs.}\label{tab:seed-selection-util}
    \begin{tabular}{lcccc} \toprule
        Seed Selection Heuristics & PA & SW& SC& QC\\ \midrule
        Random 								& 0.12  & 0.15 & 0.16 & 0.14 \\
        Na\"ive Degree---No Top-up  				& 0.22 & 0.16  & 0.16 & 0.18 \\
    	Na\"ive Degree---Knapsack 					& 0.28 & 0.17  & 0.16 & 0.18 \\
        Na\"ive Degree---Top-up 					& 0.32 & 0.17  & 0.17 & 0.19 \\
        Degree and Resource Ranked				& 0.35 & 0.21  & 0.18 & 0.20 \\
        Constrained Social Influence Weight---Ranked 	& \textbf{0.37} & \textbf{0.21} & \textbf{0.20}  & \textbf{0.28} \\
        Constrained Social Influence Weight---Max Margin 	& \textbf{0.37} & \textbf{0.22} & \textbf{0.21}  & \textbf{0.29} \\ 
        Expected Immediate Adoption 				& 0.34 &  0.22 &  0.21& 0.27* \\ \bottomrule
    \end{tabular}
\end{table}

Table \ref{tab:seed-selection-util} shows the estimated resource utilization of different networks for threshold and network average simulations for each of the eight seed selection heuristics. The two Constrained Social Influence Weight  heuristics (Ranked and Max Margin) show the highest expected utilization. The differences between the influence weight heuristics and the other heuristics are statistically significant ($p < 0.01$).

\begin{table}[htb]\footnotesize
    \centering
    \caption{Total Participation / Total Adoption under Threshold average as a percentage of the network size for four different types of networks---Preferential Attachment (PA); Small World (SW); Spatially Clustered (SC); Quantum Cosmology (QC). Both versions of the Constrained Social Influence Weight heuristics and the Expected Immediate Adoption heuristic give excellent results. The differences between the heuristics for the same type of average are statistically significant. * result is obtained with 50 independent runs}\label{tab:seed-selection-adoption}
    \begin{tabular}{lcccc} \toprule
        Seed Selection Heuristics & PA & SW& SC& QC\\ \midrule
        Random 								& 14.0 / 14.1 & 17.7 / 18.0 & 20.1 / 20.6 & 17.3 / 17.6 \\
        Na\"ive Degree---No Top-up  				& 27.2 / 28.3 & 19.7 / 20.3 & 20.3 / 20.6 & 20.2 / 22.4 \\
    	Na\"ive Degree---Knapsack 					& 31.6 / 35.7 & 19.1 / 21.3 & 18.6 / 20.7 & 22.2 / 23.4 \\
        Na\"ive Degree---Top-up 					& 37.5 / 38.0 & 21.3 / 21.9 & 21.1 / 21.8 & 22.6 / 23.8 \\
        Degree and Resource Ranked				& 41.3 / 41.6 & 24.7 / 25.4 & 22.9 / 23.4 & 22.5 / 23.3 \\
        Constrained Social Influence Weight-Ranked 	& 45.0 / 45.2 & 25.0 / 25.3 & 25.8 / 26.3 & 34.8 / 35.9 \\
        Constrained Social Influence Weight-Max Margin 	& 44.0 / 44.4 & 25.9 / 26.4 & 25.7 / 26.3 & 35.5 / 36.6 \\
        Expected Immediate Adoption 				& 51.3 / 52.1 & 26.9 / 27.5 & 27.6 / 28.4 & 37.4 / 38.8*\\ \bottomrule
    \end{tabular}
\end{table}

Table \ref{tab:seed-selection-adoption} presents the Total Participation and Total Adoption under threshold average condition for all the eight heuristics. The Expected Immediate Adoption based heuristic (EIA) shows the best result. The results remain qualitatively unchanged in the network average case.

\begin{table}[htb]\footnotesize
    \centering
    \caption{Total Participation / Total Adoption under different networks as a percentage of the network size for three different types of networks---Preferential Attachment (PA); Small World (SW); Spatially Clustered (SC).  The heuristics Constrained Social Influence Weight-Ranked, Constrained Social Influence Weight-Max Margin and the Expected Immediate Adoption give results quite close to the greedy approximation algorithm.}\label{tab:comp-to-approx}
    \begin{tabular}{lccc} \toprule
        Heuristics & Preferential Attachment & Small World & Spatially Clustered \\ \midrule
        Greedy Algorithm & 43.7 / 44.5 & 26.2 / 26.4 & 27.3 / 27.3  \\
        \textbf{Constrained Social Influence Weight-Ranked} & 43.9 / 44.5 & 22.9 / 23.6 & 24.6 / 25.1 \\
        \textbf{Constrained Social Influence Weight-Max Margin} & 33.3 / 33.4 & 22.9 / 23.6 & 23.5 / 24.1 \\ 
        \textbf{Expected Immediate Adoption} & 43.9 / 44.5 & 23.6 / 24.5 & 23.6 / 24.2 \\ \bottomrule
    \end{tabular}
\end{table}

We compare the performance of the two Influence Weight  based heuristics---Constrained Social Influence Weight-Ranked and Constrained Social Influence Weight-Max Margin---and the Expected Immediate Adoption based heuristic against the greedy approximation algorithm. Due to a large computational cost it is infeasible to run the greedy algorithm on the large networks used in experiments thus far.  So we create synthetic networks of size $100$ nodes with number of seeds $b=9$ with $3$ seeds allocated to each behavior, for the purpose of comparison via Preferential Attachment, Small World and Spatially Clustered network generators.  Our estimates are obtained by running the model $5000$ times on each network. Table \ref{tab:comp-to-approx} presents the results of the comparison for the total participation metric. Notice that for the Preferential Attachment network heuristics Constrained Social Influence Weight-Ranked and the Expected Immediate Adoption perform even better than the greedy approximation algorithm. This is not surprising since the greedy algorithm may fail to obtain the optimal solution in isolated cases. In the next section, we discuss how to distribute behaviors over the seeds.

\section{Assigning Seeds Optimal Behaviors} \label{sec:dist}
In this section we discuss the behavior distribution problem---what set of behaviors should each seed node adopt? We will first formally introduce the problem as an optimization problem. Next we will discuss different strategies for distributing the behaviors over the seed set. Then we will compare these different strategies through simulation experiments and identify the pros and the cons of each strategy.

The behaviors adopted by the set of seed nodes have different implications on the metrics---total participation, total adoption and resource utilization (see Section~\ref{subsec:measure} for a definition of these measures). If all nodes adopted the least costly behavior, for example, we would expect total participation to increase, but low resource utilization. The converse would be true in the case when seed nodes are chosen in such a way that all adopt the most expensive behavior. However, if we want to strike a balance between different behaviors such that all the behaviors are represented in the population, then we will have to distribute all the behaviors over the seed set according to some ratio. Here we formalize this scenario as an optimization problem.

\begin{description}
\item[P3: Optimum Behavior Distribution over the Seed Set:] Given that we are to pick $b$ seeds and there is a lower bound on the number of adoptions of the lowest cost behavior $s_{min}$, 
identify the optimum distribution of behaviors over the seed set and the optimum set of $b$ seeds that will maximize the resource utilization while maintaining expected spread of $s_{min}$ for the lowest cost behavior.
\end{description}

It should be noted that in the case of multiple behavior diffusion metrics like resource utilization, total participation and total adoption depends not only on the choice of the seeds but also on the distribution of the different behaviors in the chosen seed set. In the results that follow, we assume that each seed node is assigned a single behavior.  Note that this assignment doesn't preclude the seeds as well other nodes from adopting more than one behavior. We test following five different distributions of the behaviors in the seed set. In the \textit{highest cost behavior only} distribution we allocate all the seeds to the highest cost behavior and none to the other behaviors. In the \textit{proportional to cost} distribution the behaviors are distributed over the seeds in the ratio of their costs. That is, higher cost behaviors have more seeds allocated to them. The justification for this heuristic is that higher cost behaviors will diffuse less and hence if we would like to see higher resource utilization, we should set more seeds to adopt the higher cost behaviors. \textit{Uniform} distribution assigns the seeds a behavior chosen uniformly at random all the behaviors. In the \textit{Inversely proportional to cost} behavior distribution behaviors are distributed over the seeds in the inverse ratio of their costs. So the highest cost behavior gets the lowest number of seeds and the lowest cost behavior gets the highest number of seeds. Finally, in the \textit{lowest cost behavior only} distribution all the seeds are assigned to the lowest cost behavior and no seeds are given to the other behavior.  Once we have identified how many seed nodes are needed for each behavior, using the aforementioned strategies, we identify  the optimal seeds for that behavior via the Constrained Social Influence Weight---Max Margin heuristic.

\subsection{Experimental Evaluation}

In this section we investigate the effects of the different behavior distribution heuristics across the initial seed set described in the previous section. For this simulation, we use Constrained Social Influence Weight-Max Margin heuristic, since it is one of the best performing seed selection heuristics (see Table~\ref{tab:seed-selection-util}).  Adopting the strategy in~\citeN{Watts07}, we designate the fraction of seeds designated to be early adopters to be $\alpha=0.1$. This means that we have $b=51$ for the synthetic networks ($N=500$) and $b=501$ for the quantum physics collaboration network.  As before, we compute the metrics under the threshold average and the network average simulations.

In the tables in this section, we shall use the following notation: Low (All seeds are assigned Lowest Cost Behavior); Inv. (the seeds are allocated behavior in Inverse proportion to behavior cost; Unif. (the behaviors are distributed Uniformly at random); Prop. (the behaviors are distributed Proportional to behavior cost); High ( all seeds are allocated the Highest cost behavior).

\begin{table}[htb]\footnotesize
\centering
\caption{Resource Utilization under Threshold Average.  Among the behavior distribution heuristics, assigning every seed the lowest (highest) cost behavior results in the lowest (highest) utilization. Assigning seeds proportional to cost, works as well as the assigning everyone the highest cost behavior.  }\label{tab:behav-util-threshold}
    \begin{tabular}{lcccc} \toprule
        Heuristics 					& PA 		& SW 	& SC 	& QC \\ \midrule
        Lowest cost 				& 0.23   	& 0.14 	& 0.15	& 0.18 \\
        Inversely proportional  		& 0.33   	& 0.20 	& 0.20	& 0.27 \\
        Uniform distribution 			& 0.37	& 0.22 	& 0.21	& 0.29 \\
        \textbf{Proportional to Cost} 	& 0.38    	& 0.24 	& 0.22	& 0.31 \\
        \textbf{Highest Cost} 			& 0.38  	& 0.24	& 0.24	& 0.31 \\ \bottomrule  
    \end{tabular}
\end{table}

\begin{table}[htb]\footnotesize
\centering
  \caption{Total Participation / Total adoption under Threshold for different behavior distributions over seeds. Seeds are chosen under Constrained Social Influence Weight---Max Margin heuristic. Notice that when all the seeds are the same behavior (Low, High), the number of unique participants and adoptions are identical. }\label{tab:behav-part-network}
    \begin{tabular}{lcccc} \toprule
        Heuristics 				& PA 				& SW 			& SC \\ \midrule
       Lowest cost  			& 291.12 / 291.12	& 166.26 / 166.26	& 178.78 / 178.78 \\
       Inversely proportional  		& 250.91 / 254.52	& 146.36 / 150.09	& 149.61 / 154.58\\
       \textbf{Uniform distribution} 	& 234.00 / 236.46	& 133.38 / 136.04	& 132.27 / 135.77\\
       \textbf{Proportional to Cost} & 209.66 / 210.98	& 118.65 / 119.98	& 113.29 / 114.91\\
       Highest Cost 			& 144.49 / 144.49  	& 93.79 / 93.79 		& 86.01 / 86.01\\ \bottomrule        
    \end{tabular}
 \end{table}{\tiny }

Table~\ref{tab:behav-util-threshold} shows the resource utilization in different networks for the threshold average and the network average simulations. We see that when each seed is allocated the same lowest (highest) cost behavior, the utilization is lowest (highest). This is unsurprising as we should expect high utilization to occur when we have high cost behaviors in the network. In Table~\ref{tab:behav-part-network}, we show the difference between the number of unique participants and the total number of behavior adoptions. We have omitted the simulations for the network average case, due to space limitations.  Those simulations are qualitatively similar to Table~\ref{tab:behav-part-network}. Notice that when all the seeds have either the same lowest cost or highest cost behavior assigned to all of them, there is unsurprisingly no difference between the total number participants and the total number of unique adoptions. As Table~\ref{tab:behav-part-network} shows, change to the behavior distribution over the seeds alters the unique number of participants as well as the total adoption. Therefore the seed distributions need to be chosen with care, the appropriate metric in mind. Both uniform and proportional to cost behavior distribution methods seem to hit a sweet spot between utilization and behavior diversity.

\section{Discussion and Open Issues}\label{sec:disc}
One of the main motivations of the present work was to develop a realistic model of the behavior diffusion process. There are many ways in which our work can be extended. Here we discuss about a few such possible extensions.
    
Our present model does not consider the role of behavioral inertia in the diffusion process. Often people are hesitant of adopting new behaviors because they cannot free their resources from practicing an old behavior which possibly has less value. This can be modeled in our framework by introducing an additional benefit for the already adopted behaviors. Another technique would be to introduce epidemic models such as SIRS to better model long-term behavioral adoption. 

In a network, we receive social signals from our friends, but there is noise because we miss messages and or we check them late. In modeling the behavior adoption problem, we have ignored the role of constraints in how they affect the production and consumption of messages from peers. Explicit consideration of the cost of social signaling would not only make the model more realistic and provide better bounds on the maximal resource utilization of the networks resources.

The study of social diffusion and information contagion has met with its fair share of criticism. \citeN{Aral09} argues that in their observational study more than $50\%$ of the perceived behavior contagion can be attributed to homophily instead of social influence. However \citeN{shalizi11} have shown that homophily and social influence are generically confounded in social diffusion processes and it in general not easy to distinguish between the two effects. We have tried to take this observation into account while developing our model (see Section~\ref{sub-sec:model}). In our model, the diffusion process is not exclusively driven by the social influence effects but an individual's intrinsic characteristics including interest in the behavior as well as resource constraints.  

\citeN{Goel12} has provided an important critique of research that focuses on modeling large-scale adoption through contagion like models. Their research shows that much of the empirically observed cascades on a variety of networks is small---in most cases, the cascade stops within one degree of the initial adopting seed. However, we note that large cascades do occur but are rare---the ``Spanish Flu'' of 1918-19 infected 500M people, a third of the world's population at the time~\cite{Johnson2002,Taubenberger2006}. The main difference between~\citeN{Goel12} and our own work is that while there is less empirical evidence for large cascades, we are interested in \textit{engineering} large cascades through careful seed selection. We conjecture that large cascades are less likely to occur in ``natural'' cascades because the chance that all the influentials adopt the behavior nearly simultaneously to cause a large cascade would be rare.

\section{Conclusions}\label{sec:conclusion}
In this paper we have considered the problem of seed selection to maximize resource utilization for multiple behavior diffusion processes. This problem has implications in a varied set scenarios, ranging from viral marketing campaign to mass adoption of sustainable behaviors and public health. We have considered a social network where individuals are constrained by available resources for adoption of new behaviors. Our work is the first of its kind, to the best of our knowledge to study the influence of individual resource constraints on multiple, costly behavior adoption. Mindful of the confound between homophily and structural effects, individuals in our model respond to the social influence as well as the intrinsic utility of a spreading behavior. We have shown that the core optimization problems are NP-complete and provided novel heuristics for solving them. We have tested our heuristics against the random and na\"ive methods and have shown that our heuristics perform very well. We have also shown that depending on the objective of the campaign, there are different strategies for distributing the behaviors over the initial seed set that result in qualitatively different outcomes. Some of the open issues include the use of epidemic models for modeling long-term behavior adoption and incorporating the idea of noisy social signals in modeling behavior adoption.

\newpage
\appendix
\section*{Appendix}
\section{Computation of Immediate Adoption Probability}
\label{sec:ia-comp}
In this section we discuss an example of how the different immediate behavior adoption probabilities for a node are computed. This computation depends on whether all the thresholds of a node have the same random value (\emph{matched threshold}) or independent and uniformly distributed random values (\emph{different threshold}). Although all the results in this paper are for the \emph{different threhsold} model, here we present examples for both cases for the sake of completeness. Suppose a vertex $v$ has $8$ neighbors. According to our threshold model each of its neighbors exerts an influence of $0.125$ on it. Suppose $v$ is already a seed for behavior \emph{A}; moreover it has $2$ neighbors with behaviors \emph{B}, and $3$ neighbors with behavior \emph{C}. We are interested in computing the probabilities that it will adopt each of the three behaviors in the next time step. 

\subsection{Matched Threshold:}
In this case, for any vertex the thresholds for all the three behaviors will be the same, but it will be assigned independently of other nodes and uniformly at random from the interval $[0,1]$. So if $v$'s threshold is in the interval $[0,0.25]$, then $v$ will consider both behaviors B and C together with A for adoption. Our payoff maximizing behavior adoption process dictates that it will adopt a subset of \emph{A,B} and \emph{C} that will provide maximum combined payoff subject to the resource constraint of the node. This adoption decision process is equivalent to solving a knapsack problem. We will solve the knapsack problem and decide which behaviors out of the three behaviors - A, B and C - will be adopted. Any such behavior will be adopted with probability $0.25.$ 

If $v$'s threshold is in the interval $(0.25,0.375]$ then $v$ will only consider behavior C together with behavior A for adoption. Again after solving knapsack problem and deciding which behaviors to adopt out of A and C, it will adopt any such behavior with probability $0.125$. 

At last if $v$'s threshold is in the interval $(0.375,1]$ then it will definitely adopt behavior A - the probability of which is $1-0.375=0.625$. In the worst case the complexity of this probability computation process for each node is linear in the number of behaviors.

\subsection{Different Threshold:}
In the \emph{different threshold} case, for each vertex the thresholds are assigned independently and uniformly at random from the interval $[0,1]$. So in this case we need to consider all possible combinations of behaviors B and C together with A (which will always be considered) and work out the individual probabilities. The worst case computational complexity of this process for each node will be exponential in the number of behaviors. In our example we need to consider the following cases:
\begin{enumerate}
\item[i)] B and C together with A; any behavior selected by the knapsack algorithm will be adopted with probability $0.25\times0.375=0.09375$.
\item[ii)] B together with A; any behavior selected by the knapsack algorithm will be adopted with probability $0.25\times(1-0.375)=0.15625$.
\item[iii)] C together with A; any behavior selected by the knapsack algorithm will be adopted with probability $(1-0.25)\times0.375=0.28125$.
\item[iv)] Only A; A will be adopted with probability $(1-0.25)\times(1-0.375)=0.46875$.
\end{enumerate}

As this section illustrates, computation of this metric is considerably more complex and costly in comparison to node degree and influence weight based heuristics.

\section{Variants of Seed Selection Algorithm} \label{sec:t-nt-comp}
Table \ref{tab:t-nt-comp} presents the Total Participation and Total Adoption values for the different variants of the KKT seed selection algorithm and IA based seed selection heuristic. \textbf{T} --- \emph{topped up} versions provide better spread than the \textbf{NT} --- \emph{no top up} versions which is expected since more resource is required for starting the diffusion in the \textbf{T} version. However for the same type of top up regime there is not much difference between the \textbf{S} (\emph{single} behavior per seed) and \textbf{M} (\emph{multiple} behaviros per seed) version. If we consider exact algorithms instead of heuristics and approximation algorithms, then it is easy to see that \textbf{S} version can never produce a result that is better than the \textbf{M} version, since solution for \textbf{S} version is also a valid solution for \textbf{M} version. This fact accounts for the absence of any real difference between the \textbf{S} and \textbf{T} versions in the case of the heuristic and the approximate algorithm. 
\begin{table}[htb]\footnotesize
    \centering
    \caption{Total Participation / Total Adoption under different networks as \% of the network size. S and M variants give almost identical results with T variants exceeding NT variants.}\label{tab:t-nt-comp}
    \begin{tabular}{cccc} \toprule
        Heuristics & PA & SW & SC \\ \midrule
        KKT-S-T & 43.7 / 44.5 & 26.2 / 26.4 & 27.3 / 27.3  \\
        H8-S-T & 43.9 / 44.5 & 23.6 / 24.5 & 23.6 / 24.2 \\ 
        \midrule
        KKT-S-NT & 39.5 / 39.5 & 21.7 / 22.0 & 22.0 / 22.5  \\
        H8-S-NT & 39.51 / 39.8 & 22.7 / 23.2 & 20.0 / 20.5 \\ 
        \midrule
        KKT-M-T & 43.7 / 44.5 & 26.2 / 26.4 & 27.1 / 27.1  \\
        H8-M-T & 39.0 / 45.8 & 22.8 / 23.5 & 21.9 / 22.6 \\ 
        \midrule
        KKT-M-NT & 39.5 / 39.5 & 21.7 / 22.0 & 22.4 / 23.0  \\
        H8-M-NT & 39.5 / 43.3 & 22.7 / 23.2 & 19.7 / 21.1 \\ 
        \bottomrule
        
    \end{tabular}
\end{table}

\section{Equivalence between the Threshold and Network Average Cases}
\label{app:equiv}
In table \ref{tab:seed-selection-util} we have seen that the resource utilization values under threshold and network average conditions are almost identical. In this section we will investigate the relationship between these two type of averages. First we will show an exact relation for the regular networks. This special case will provide us with helpful insights for analyzing the more general cases.

Suppose we have $n$ nodes with fixed resource distribution. Each node will have a fixed in-degree $\rho$. Each node selects $\rho$ in-neighbors uniformly at random from the rest $n-1$ nodes. We assume that only in-neighbors can exert influence on a node. In the \textit{threshold average} (\textbf{TA}) case the nodes choose the in-neighbors at random at the beginning of the simulation and then at the start of each simulation run select the threshold values uniformly at random from the interval $[0,1]$. In the  \textit{network average} (\textbf{NA}) case each node chooses threshold values uniformly at random from the interval $[0,1]$ at the beginning of the simulation and then at the start of each simulation run it chooses its $\rho$ in-neighbors uniformly at random from the rest of the nodes. Both the processes start with a set $S$ of seeds for each of the $k$ behaviors. The diffusion process unfolds over time according to the Sticky multiple behavior diffusion process. We will show that $\sigma_{TA}(S)=\sigma_{NA}(S)$ by proving the following lemma:

\begin{lemma}
    For a given seed set $S$, the following two distributions over the sets of nodes are the same:
    \begin{enumerate}
        \item The distribution of probability over the active sets at the completion of the diffusion process in the \textbf{TA} case.
        \item The distribution of probability over the active sets at the completion of the diffusion process in the \textbf{NA} case.
    \end{enumerate}
\end{lemma}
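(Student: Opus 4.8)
The plan is to exhibit an explicit coupling under which the two experiments produce \emph{identical} active sets on a sample-path basis; this forces the two induced distributions over active sets to coincide, and in particular $\sigma_{TA}(S) = \sigma_{NA}(S)$.

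First I would record the key structural fact about the Sticky multiple behavior diffusion model: once we fix (i) the seed set $S$, (ii) the in-neighbor choices of every node (call this the topology $G$), and (iii) the threshold vector $\Theta = (\theta_i(v))_{v \in V,\, 1 \le i \le k}$, the diffusion is \emph{deterministic} and, because the process is progressive and order independent, its outcome is well defined independently of the update order. Write $A(G,\Theta,S)$ for the resulting set of active nodes. Since the resource distribution is held fixed throughout in both experiments, the only sources of randomness are $G$ and $\Theta$.

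Next I would identify the joint law of $(G,\Theta)$ in the two experiments. In the \textbf{TA} case, the $\rho$-in-regular topology $G$ is drawn once at the start (each node independently picking $\rho$ in-neighbors uniformly from the other $n-1$ nodes), and then at the start of each run $\Theta$ is drawn from $U(0,1)^{\otimes nk}$, independently of $G$. In the \textbf{NA} case, the order of sampling is reversed: $\Theta$ is drawn once at the start, then $G$ is re-sampled each run from the same $\rho$-in-regular model, independently of $\Theta$. In both cases, therefore, the joint law of the pair $(G,\Theta)$ is the \emph{same} product measure $\mu_G \otimes U(0,1)^{\otimes nk}$; only the temporal order in which the two independent coordinates are realized differs, and that is irrelevant to the joint distribution.

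Finally, couple the two processes by feeding them the \emph{same} realized pair $(G,\Theta)$. Then the active set in the TA run and the active set in the NA run are literally the same set $A(G,\Theta,S)$, so the two distributions over active sets are identical, which is exactly the lemma; and in particular
\begin{equation*}
\sigma_{TA}(S) \;=\; \mathbb{E}\big[\,|A(G,\Theta,S)|\,\big] \;=\; \sigma_{NA}(S),
\end{equation*}
the middle expectation being taken under $\mu_G \otimes U(0,1)^{\otimes nk}$ and well defined by Fubini since $|A(\cdot,\cdot,S)| \le n$. I expect the only step that genuinely needs care — and the reason the lemma is stated for regular graphs — to be the verification that the topology law $\mu_G$ is the \emph{same} object in the two experiments: for $\rho$-in-regular random graphs this is immediate, whereas for a general random-graph generator one must additionally argue that the degree distribution experienced by a node is stable across runs, which is the informal content of the discussion preceding this lemma.
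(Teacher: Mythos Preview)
Your argument is correct and takes a genuinely different route from the paper. The paper proceeds by induction on the time step $t$: at each step it writes down the conditional probability that an inactive node $v$ with $\kappa(v)=\kappa$ becomes active at step $t+1$ given the history, and uses the constant in-degree $\rho$ to reduce this to an expression involving only the cardinalities $|S_i^{(t)}\setminus S_i^{(t-1)}|$ and $|S_i^{(t-1)}|$, which it then observes is identical in the TA and NA settings. You instead bypass the step-by-step computation entirely: you note that the final active set is a deterministic functional $A(G,\Theta,S)$ of the topology and the threshold vector (resources and seeds being fixed), that in both experiments $G$ and $\Theta$ are independent with the same marginal laws, and hence that the joint law of $(G,\Theta)$ --- and therefore the law of $A(G,\Theta,S)$ --- is the same product measure regardless of sampling order.

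Your route is more elementary and in fact more general: nothing in the coupling/Fubini step uses $\rho$-regularity, so the argument goes through verbatim for any random graph generator as long as $G$ and $\Theta$ are sampled independently. The paper's inductive calculation, by contrast, leans on regularity to collapse each transition probability to a degree-free formula, and explicitly flags the non-regular case as requiring additional care. One small correction to your final paragraph: the spot where you locate the need for regularity --- verifying that $\mu_G$ is the same object in both experiments --- is not where any real difficulty lies, since both experiments draw the topology from the same generator by construction. The paper's hesitation about general degree distributions is an artifact of its proof technique rather than a genuine obstruction at the level of the joint law your argument exploits.
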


\begin{proof}
    We prove the lemma by induction over the time step $t$. Clearly it is true at $t=0$. Let $S_i^{(t)}$ denote the set of nodes with behavior $i$ at the end of time step $t$, and $S^{(t)}:=\cup_i S_i^{(t)}$. For the \textbf{TA} case, suppose $v$ is a node that has not adopted any behavior at the end of time step $t$ and $\kappa(v)=\kappa \ne 0$. As before, the probability that $v$ will become active at the time step $t+1$, given that it was not active at the previous time step is -
    \begin{align*}
	    & 1-\prod_{i=1}^{\kappa}\left(1-\frac{\sum_{w\in S^{(t)}_{i}\setminus S^{(t-1)}_{i}}b_{v,w}}{1-\sum_{w\in S^{(t-1)}_{i}}b_{v,w}}\right) \\
	    = & 1-\prod_{i=1}^{\kappa}\left(1-\frac{\sum_{w\in S^{(t)}_{i}\setminus S^{(t-1)}_{i}}\frac{1}{\rho}}{1-\sum_{w\in S^{(t-1)}_{i}}\frac{1}{\rho}}\right) \\
	    = & 1-\prod_{i=1}^{\kappa}\left(1-\frac{ |S^{(t)}_{i}\setminus S^{(t-1)}_{i}|}{\rho-|S^{(t-1)}_{i}|}\right)
    \end{align*}
    For the \textbf{NA} case, again let $v$ be a node that is not active at time step $t$ with $\kappa(v)=\kappa \ne 0$. The probability that $v$ will become active at time step $t+1$, given that it was not active till the previous time step is given by -
    \begin{align*}
	    & 1-\prod_{i=1}^{\kappa}\left(1-\frac{\sum_{w\in S^{(t)}_{i}\setminus S^{(t-1)}_{i}}b_{v,w}}{1-\sum_{w\in S^{(t-1)}_{i}}b_{v,w}}\right) \\
	    = & 1-\prod_{i=1}^{\kappa}\left(1-\frac{\sum_{w\in S^{(t)}_{i}\setminus S^{(t-1)}_{i}}\frac{1}{\rho}}{1-\sum_{w\in S^{(t-1)}_{i}}\frac{1}{\rho}}\right) \\
	    = & 1-\prod_{i=1}^{\kappa}\left(1-\frac{ |S^{(t)}_{i}\setminus S^{(t-1)}_{i}|}{\rho-|S^{(t-1)}_{i}|}\right)
    \end{align*}
    Since the in-degree of every node is same, we get the same probability distribution over the active sets in both the cases.
\end{proof}

Consequently we obtain the result that the expected number of active nodes in both the \textbf{TA} and \textbf{NA} cases are the same for the networks with constant in-degree. In the general case when the networks do not have a constant degree for every node but the randomization over the network structure preserves a fixed degree distribution (as in the case of Power Law or  Spatially Clustered networks) we may obtain similar results. However the probability that a node becomes active in time step $t+1$, given that it was not active till time step $t$ would be calculated for a node $v$ with $\kappa(v)=\kappa \ne 0$ and degree $d \ne 0$. Assuming that the distribution over the values $d$ would be the same at the time step $t$ in  both the cases (notice that the distribution over the values $\kappa$ would be the same for both the cases since the initial distribution of node resources are the same), we will obtain similar results. Our experimental results show that this observations about the Sticky model carries over to the general model. In all of the simulation experiments we observe that the estimations of the expected values of the different metrics (total participation, total adoption, resource utilization etc.) for both the \textbf{TA} and \textbf{NA} cases are almost identical.

\newpage
\printbibliography 
\end{document}